\algrenewcommand\algorithmicrequire{\textbf{Input:}}
\algrenewcommand\algorithmicensure{\textbf{Output:}}
\newcommand{\bra}[1]{\left \langle{#1} \right |}
\newcommand{\ket}[1]{\left |{#1} \right \rangle}
\newcommand{\braket}[2]{\langle{#1}|{#2}\rangle}
\newtheorem{datainput}{Data Input}
\newtheorem{lem}{Lemma}
\newtheorem{defn}{Definition} 
\newtheorem{fact}{Fact}
\newtheorem{assumption}{Assumption}
\newtheorem*{claim*}{Claim}
\newtheorem{corollary}{Corollary}
\definecolor{Pr}{rgb}{0.4,0.3,0.9}
\DeclareMathOperator*{\argmax}{arg\,max}
\newtheorem{theorem}{Theorem}
\def\be{\begin{eqnarray}}
\def\ee{\end{eqnarray}}
\title{}
\author{}
\date{}
\begin{document}

\title{A Quantum Online Portfolio Optimization Algorithm}

\author{Debbie Lim\footnote{Centre for Quantum Technologies, {\tt limhueychih@gmail.com}}, Patrick Rebentrost\footnote{Centre for Quantum Technologies, {\tt cqtfpr@nus.edu.sg}}}

\date{}

\maketitle

\begin{abstract}
Portfolio optimization plays a central role in finance to obtain optimal portfolio allocations that aim to achieve certain investment goals. Over the years, many works have investigated different variants of portfolio optimization. Portfolio optimization also provides a rich area to study the application of quantum computers to obtain advantages over classical computers. 
In this work, we give a sampling version of an existing classical online portfolio optimization algorithm by Helmbold \textit{et al.}, for which we in turn develop a quantum version. The quantum advantage is achieved by using techniques such as quantum state preparation, inner product estimation and multi-sampling. Our quantum algorithm provides a quadratic speedup in the time complexity, in terms of $n$, where $n$ is the number of assets in the portfolio. The transaction cost of both of our classical and quantum algorithms is independent of $n$ which is especially useful for practical applications with a large number of assets. 
\end{abstract}

\section{Introduction}
\subsection{Online optimization}\label{convex_online_optimization_intro}

Online optimization is a branch of optimization, where the input data is revealed over time and decisions have to made while having incomplete knowledge about the input data. At every time step, a loss function will be given based on the decisions made so far. A feature of online optimization is that the sequential input can be given in an adversarial manner; the provable guarantees hold even if the input is chosen by an adversary who knows the algorithm's strategy. Online convex optimization studies the problem of optimizing a convex function over a convex set in an online fashion. The popular first-order algorithms for online convex optimization include variants of gradient descent, mirror descent and coordinate descent \cite{Zinkevich2003a, Hazan2016b, Flaxman2004, Hazan2007, Wang2014}. 

Apart from the commonly known gradient descent method, the multiplicative weight update method is another alternative to solving optimization problems. The multiplicative weight update method is a primal-dual algorithm proposed by Arora and Kale \cite{Arora2016a}, which assigns an initial weight to each expert and at every iteration, updates the weights according to the experts' performances. This algorithm can also be extended to the online convex optimization framework when the convex set is the $n$-dimensional simplex.  The multiplicative weight update method is one of the second-order methods in online convex optimization besides the Newton's method \cite{Schraudolph2007}, which iteratively finds the roots of a differentiable function. Some applications of the 
multiplicative weight update method include solving linear programs and semidefinite programs \cite{Arora2016}, learning algorithms \cite{Helmbold2009}, and portfolio selection \cite{Helmbold1998}.

In zeroth-order online convex optimization (bandit convex optimization), the feedback is in the form of a real number (instead of a loss function), thereby being less informative. The first algorithm for bandit convex optimization was proposed by Flaxman \textit{et al.} \cite{FKM}. Subsequently, many follow up works \cite{pmlr-v49-bubeck16, 10.1145/3453721, Lattimore2020} have been done to improve the regret bound. 

\subsection{Portfolio optimization}
Portfolio optimization is a standard problem in mathematical finance. The first formalization, the Markowitz (mean-variance) model, is proposed by Nobel prize winner, Harry  Markowitz \cite{HarryMarkowitz1952}. It is a single-period unconstrained quadratic programming problem, which  either maximizes the portfolio return for a given level of risk or minimizes the risk for a given return. However, there are several caveats concerning the implementation of this model. Among them, the model relies on the knowledge of the mean and covariance matrix of the asset returns.
Besides that, the model suffers from error maximization, i.e., a small change in the inputs can result in a large change in the portfolio \cite{ElGhaoui:M00/59}. Consequently, many refinements have been proposed to make the model more realistic \cite{Skolpadungket2007, Ponsich2012, Yen2014, Kalayc2017, Khan2021, Unknowna, King1993, Konno1991, Mills1997, Turnbull1990ARO, Morgan1996, Rockafellar2002}.

Reference~\cite{Helmbold1998} by Helmbold \textit{et al.} is a seminal paper discussing a (classical) online algorithm for portfolio selection based on the multiplicative weight update rule. The update rule was derived using a framework introduced by Reference~\cite{Kivinen1995} for online regression. The authors adapted this framework to the online portfolio selection setting  and the resulting algorithm uses linear (in the number of assets) time and space to update the portfolio vector at each time step. 
A survey on (classical) online portfolio selection was done by Reference~\cite{Li2014} from an online machine learning perspective. The survey paper expressed online portfolio selection as a sequential decision problem and included various classes of related algorithms, such as follow the winner, follow the loser, pattern-matching-base approaches and meta-learning algorithms. 

\subsection{Our work}
Our main contribution is an online quantum algorithm for portfolio selection. We show that the online portfolio selection algorithm  proposed by Helmbold \textit{et al.} \cite{Helmbold1998} can be quantized. We adopt a step by step approach to demonstrate how we arrive at the quantum algorithm. We start from Algorithm~\ref{algo_helmbold}, the slightly extended version of the classical online portfolio optimization algorithm from Reference~\cite{Helmbold1998} which includes a transaction cost (Corollary~\ref{Helmbund_regretbound}). Next, we implement a sampling procedure in Algorithm~\ref{algo_helmbold_sampling} which renders the transaction cost independent of $n$ (Theorem~\ref{regret-bound_helmold_sampling}). Subsequently, we build on Algorithm~\ref{algo_helmbold_sampling} but use an inner product estimation procedure to compute the portfolio vectors in Algorithm~\ref{algo_helmbold_sampling_inner} (Corollary~\ref{apprx_Helmbund_regretbound}). Lastly, we 
use quantum inner product estimation and quantum multi-sampling to replace their classical counterparts and use quantum state preparation to prepare the portfolio vector when devising our quantum online portfolio optimization algorithm, Algorithm~\ref{algo} (Theorem~\ref{thmQuantum}).

We summarize our results in the table below: 

\begin{table}[h]
\centering 
\begin{tabular}{ |p{3cm}|c|c|c|c| }
\hline Name & Alg. & Regret & Run time & Transaction cost\\
\hline
Online  & \ref{algo_helmbold} &  $\frac{1}{r_{\min}}\sqrt{\frac{\log n}{2T}}$ & $O(Tn)$ & $O(TnC)$\\
\hline Sampling-based Online & \ref{algo_helmbold_sampling} & $\frac{2}{r_{\min}}\sqrt{\frac{\log n}{2T}}$ & $O(T^2n\log \frac{T}{\delta})$ & $O(T^2 C\log \frac{T}{\delta})$\\
\hline 
Approximate Sampling-based Online & \ref{algo_helmbold_sampling_inner} & $\frac{8}{r_{\min}}\sqrt{\frac{\log n}{2T}}$ & $O\left(T n  + \frac{T^2}{r_{\min}}\log\frac{T}{\delta}\right)$  & $O(T^2 C\log\frac{T}{\delta})$\\
\hline 
Quantum Online & \ref{algo} & $\frac{12}{r_{\min}}\sqrt{\frac{\log n}{2T}}$ & $\tilde O\left(T^3\sqrt{\frac{n}{r_{\min}}}\log^{1.5}\left(\frac{1}{\delta}\right)\right)$ & $O(T^2 C\log\frac{T}{\delta})$\\
\hline 
\end{tabular}\caption{Summary of results. Throughout this work, $n$ is the number of assets, $T$ is the total number of time steps, and $r_{\min}$ is the lower bound for the price relatives (see Assumption \ref{assumption_r_min}). In addition, $C$ is the transaction cost (see Assumption \ref{assumption_cost}) and $3\delta$ is an upper bound on the probability of failure.}

\end{table}

The regret bound achieved by Algorithm~\ref{algo} is larger than than that of Algorithm~\ref{algo_helmbold} only by a small factor, and the algorithm provides a quadratic speedup in the run time in terms of $n$, the number of assets in the portfolio. The speedup is due to the use of amplitude amplification, quantum inner product estimation, and quantum multi-sampling. In addition, the algorithm does not have to store the portfolio vectors $w^{(t)}$ explicitly for every time step $t$. Instead, the portfolio vectors can be computed efficiently via  unitaries that perform arithmetic operations. Moreover, the transaction cost of our algorithm is independent of $n$, which is especially useful for practical applications with a large number of assets in the portfolio. 
 
\subsection{Related work}\label{Related_Work}
References \cite{Orus2019,  Bouland2020, Egger2021, Herman2022} discuss the state-of-the-art, potential, and challenges of quantum computing in finance. 
Rosenberg \textit{et al.} \cite{rosenberg2016solving} discuss a non-convex discrete portfolio optimization problem, in the context of D-Wave's quantum annealer. Their problem formulation aims to maximize the expected return while minimizing the risk and transaction costs. They numerically showed that the quantum annealer in principle could solve this problem with high probability and this success probability can be increased by making adjustments to the annealer.  
In Ref.~\cite{Rebentrost2018a}, the authors proposed a quantum algorithm for the unconstrained  portfolio optimization problem. The algorithm uses quantum linear system solvers \cite{Harrow_2009,childs2017quantum} to obtain speedups for portfolio optimization problems that can be reduced to unconstrained quadratic programs, which in turn are reducible to a single linear system. Subsequently, Ref.~\cite{Kerenidis2019} gave a quantum algorithm for the general constrained portfolio optimization problem with an arbitrary number of nonnegativity and budget constraints, resulting in a polynomial speedup in terms of the number of assets, as compared to the best known classical algorithm when only a moderately accurate solution is required.    

In terms of practical implementation, Reference~\cite{Hodson2019} evaluated the experimental performance of using the Quantum Approximate Optimization Algorithm and the Quantum Alternating Operator Ansatz to solve a discrete portfolio optimization problem for a multi-period portfolio rebalancing setting. Subsequently, Ref.~\cite{Slate20} numerically showed that the Quantum Walk Optimization Algorithm is capable of achieving a significantly better performance. 
In the noisy intermediate-scale quantum (NISQ) setting, the work by Ref.~\cite{YalovetzkyNISQ-HHLHardware} proposed a hybrid algorithm for end-to-end execution of small scale portfolio optimization problems on near-term devices. Their algorithm uses techniques such as mid-circuit measurement, quantum conditional logic, and qubit reset/reuse, and also improved on the existing eigenvalue inversion component of HHL. 

Online optimization has been considered in the quantum setting. Boosting is an approach to improve the performance of a weak learning algorithm in terms of its accuracy. Quantum boosting was discussed in Reference~\cite{arunachalam2020quantum} to improve the time complexity of the widely used classical AdaBoost proposed by Ref.~\cite{freund1999short}. Subsequently, a follow-up work by Ref.~\cite{izdebski2020improved} was done to provide a significantly faster and simpler quantum boosting algorithm. 
The Hedge algorithm proposed by Freund and Schapire uses the multiplicative weight update method to adaptively allocate mixed strategies to solve an adversarial online optimization problem. The Sparsitron by Reference~\cite{klivans2017learning} which is based on the Hedge algorithm, is a machine learning algorithm for  undirected graphical models. Quantum versions of both the Hedge algorithm and the Sparsitron were discussed in Ref.~\cite{Rebentrost2021}. 
In zeroth-order optimization, there are instances where quantum advantage have been proven. For example, in the multi-armed bandits setting, Ref.\cite{wang2021quantum} proposed a quantum algorithm that provides an exponential speedup in terms of the time $T$ in the regret bound as compared to the well known classical lower bounds \cite{auer2002nonstochastic, lattimore2020bandit}. In bandit convex optimization, Ref.~\cite{He2022} gave an quantum algorithm that achieves a regret bound that is independent of $n$, the dimension. This  outperforms the best known optimal classical algorithm \cite{Shamir2017}. 

\section{Preliminaries}\label{Preliminaries}
\subsection{Notations}
We use $[n]$ to represent the set $\{1, \cdots, n\}$, where $n\in\mathbb{Z}_+$ and denote the $i$-th entry of a vector $v\in\mathbb{R}^n$ as $v_i$ for $i \in [n]$. If a vector has a time dependency we denote it as $v^{(t)}$. Let $\textbf{e}_i$ be the vector of all zeros with a 1 in the $i$-th position. The $\ell_1$-norm of a vector $v\in\mathbb{R}^n$ is defined as $\lVert v\rVert_1 \coloneqq \sum_{i=1}^n |v_i|$. We use $\bar{0}$ to denote the all zeros vector and use $\ket{\bar{0}}$ to denote the state $\ket{0}\otimes\cdots \otimes\ket{0}$, where the number of qubits is clear from the context. The maximum entry in absolute value of a vector $v\in\mathbb{R}^n$ is denoted as $\lVert v\rVert_{\max}=\displaystyle\max_{i\in[n]} |v_i|$ and we denote the maximum entry of a vector $v\in\mathbb{R}^n$ as $v_{\max} = \displaystyle\max_{i\in[n]} v_i$. For $v\in\mathbb{R}^n$  and $k\in\mathbb{R}$, $k^v\in\mathbb{R}^n$ is the element-wise exponential $v$, i.e. $(k^v)_i = k^{v_j}$.  We write the natural logarithm (base $e$) as $\log$.  We use $\tilde{O}(\cdot)$ to hide the polylog factor, i.e., $\tilde{O}(f(n))= O(f(n)\cdot {\rm polylog}(f(n)))$. We sometimes use $O(1)$ to denote  a constant.

\subsection{The computational model}\label{input_access}

We refer to the run time of a classical/quantum computation as the number of basic gates performed. We assume a classical arithmetic model, which allows us to ignore issues arising from the fixed-point representation of real numbers. The basic arithmetic operations take constant time.
In the quantum setting, we assume a quantum circuit model. Each quantum gate in the circuit represents an elementary operation, and the application of each quantum gate takes constant time. The time complexity of a given unitary operator $U$ is the minimum number of basic quantum gates required to prepare $U$.
In addition, we assume a quantum arithmetic model, which is equivalent to the classical model in that arithmetic operations take constant time.   
Our quantum algorithm assumes quantum query access to certain vectors. 
For the oracles, the representation of real numbers to finite precision is also not taken into account. 
Given a vector $v \in \mathbb{R}^n$, we say we have quantum query access to this vector if we have access to the operation $O_v$ which performs
\be
O_v\ket{i}\ket{\bar 0} & = & \ket{i}\ket{v_i}.
\ee
The second register is assumed to contain sufficient qubits to make all the subsequent computations accurate, in analogy to the sufficient bits that a classical algorithm assumes to run correctly.

\section{The online portfolio optimization framework}\label{framework}
Consider $T$ discrete time  steps and $n$ assets, and the setting as in Ref.~\cite{Helmbold1998}. 
A portfolio of these $n$ assets at time $t\in[T]$ is described by a vector $w^{(t)}$ such that for each $i\in [n]$, $w_i^{(t)}\geq 0$ and $\displaystyle\sum_{i=1}^n w_i^{(t)}=1$. Here, we make the no-shortselling assumption, see also below. 
Each asset has a price as a function of time and in this work we consider the time series of closing prices. The original paper \cite{Helmbold1998} uses the opening prices, and we assume that the closing price at $t$ is the same as the opening price at $t+1$.
Define the day-to-day return as
\begin{equation}
R_i^{(t)} := \frac{\text{closing price of asset } i \text{ on day } t}{\text{closing price of asset } i \text { on day } t-1}.
\end{equation}
In this work, the performance of the assets is reflected in a price relative vector $\rho^{(t)}\in\mathbb{R}^n_+$, where for all $i\in[n]$, $\rho_i^{(t)}$ is the ratio  
\begin{equation}
\rho_i^{(t)} := \frac{R_i^{(t)}}{R_{\max}^{(t)}},
\end{equation}
where $R_{\max}^{(t)} = \displaystyle\max_{j\in[n]} R_j^{(t)}$.
By definition, $0\leq \rho_i^{(t)}\leq 1$ for all $i\in[n]$ and $t\in[T]$. However, we assume a known lower bound $r_{\min} \in (0, 1]$ such that $0 < r_{\min} \leq \rho_i^{(t)}$ for all $i\in[n]$ and $t\in[T]$. Given $w$ and $\rho$, an investor's wealth changes by a factor of 
\be
w\cdot \rho & = & \displaystyle\sum_{i=1}^n w_i\cdot \rho_i
\ee
from one trading day to the next. In the online portfolio selection setting, the learning algorithm has access to the price relative vectors 
$\rho^{(1)}, \cdots, \rho^{(t)}$ at the end of trading day $t$. The algorithm then selects the portfolio $w^{(t+1)}$ for the next day. At the end of each trading day $t$, $\rho^{(t)}$ is revealed and the investor's wealth changes by a factor of $w^{(t)}\cdot \rho^{(t)}$. As time progresses, $\rho^{(1)}, \cdots, \rho^{(T)}$ will be revealed and $w^{(1)}, \cdots, w^{(T)}$ will be selected. From the start of trading day $1$ through the start of trading day $(T+1)$, the wealth changes by a factor of 
\be
S & = & \displaystyle\prod_{t=1}^T w^{(t)}\cdot \rho^{(t)}.
\ee
Similar to the analysis in Reference~\cite{Helmbold1998}, we will deal with the normalized logarithm of $S$: 
\be
LS & = & \frac{1}{T}\displaystyle\sum_{t=1}^T \log\left (w^{(t)}\cdot \rho^{(t)}\right),
\ee
since wealth often grows or decays geometrically in typical markets. 

Consider the ``offline gain", $LS^*=\frac{1}{T}\displaystyle\max_{\{w:\lVert w\rVert_1=1\}} \sum_{t=1}^T \log\left(w\cdot \rho^{(t)}\right)$, which is the maximum gain in wealth achievable when choosing the same portfolio $w^*=\displaystyle\argmax_{\{w:\lVert w\rVert_1=1\}} S(w)=\displaystyle\argmax_{\{w:\lVert w\rVert_1=1\}} LS(w)$ for all trading days $t\in [T]$. 
The difference of offline loss and the loss of some sequence of normalized $w^{(t)}$ is called regret. Formally, it is $LS^* - LS$ and can be naively bounded as $LS^* - LS\leq \log\left(\frac{1}{r_{\min}}\right)$. The bound follows from
\be
LS^* - LS & = & \max_{\{w:\lVert w\rVert_1=1\}} \frac{1}{T}\displaystyle\sum_{t=1}^T \log(w\cdot\rho^{(t)}) - \frac{1}{T}\displaystyle\sum_{t=1}^T \log(w^{(t)}\cdot\rho^{(t)}) \\
& \leq & \frac{1}{T}\displaystyle\sum_{t=1}^T \log\left(\lVert \rho^{(t)}\rVert_{\max} \right) - \frac{1}{T}\displaystyle\sum_{t=1}^T \log\left(\lVert w^{(t)}\rVert_1 r_{\min}\right)\\
& = & \log\left(\frac{1}{r_{\min}}\right).
\ee
This bound does not decrease with $T$. A main result of the work by  Reference~\cite{Helmbold1998} is a sequence of $w^{(t)}$ which shows a regret bound of about $1/\sqrt T$.
 
We would like to emphasize again the following assumptions.
\begin{assumption}
We assume that there is no short-selling throughout the trading period. Therefore,  $w^{(t)}_i \geq 0$ for all $t\in[T]$ and $i\in[n]$. 
\end{assumption}
To model the cost of trading, we assume a fixed transaction cost per investment. This cost will highlight the difference between the standard and the sampling algorithm. It was mentioned as a possible extension in Reference~\cite{Helmbold1998}.
\begin{assumption}\label{assumption_cost}
We assume that a transaction cost of $C\geq 0$ is incurred when investing in a single asset. Here, this transaction cost is independent of the amount of asset that is bought. 
\end{assumption}
The following assumption is as in the classical work and simplifies the analysis of the regret bound. Reference~\cite{Helmbold1998} also relaxes this assumption and provides a different regret bound for the relaxed setting.  
\begin{assumption}\label{assumption_r_min} We assume a known lower bound $r_{\min} \in (0,1]$ for the price relatives, i.e., 
$r_{\min}\leq \rho_i^{(t)}\leq 1$ for all $i\in[n]$ and $t\in[T]$.
\end{assumption}

\subsection{Helmbold \textit{et al.}'s algorithm}
In Reference~\cite{Helmbold1998}, the authors provide a online algorithm for portfolio optimization. Given a current portfolio $w^{(t)}$, consider the following optimization problem 
\begin{equation}\label{opt}
\max_{\{w^{(t+1)}:\lVert w^{(t+1)}\rVert_1=1\}} \eta\left(\log \left(w^{(t)}\cdot \rho^{(t)}\right)+\frac{\rho^{(t)}\cdot \left(w^{(t+1)} - w^{(t)}\right)}{w^{(t)}\cdot \rho^{(t)}}\right) - d\left(w^{(t+1)}, w^{(t)}\right).
\end{equation}
The problem is to pick a portfolio vector $w^{(t+1)}$ that maximizes the gain and at the same time, is close to the portfolio vector picked in the previous iteration. Here, $\eta$ is the ``learning rate" and $d\left(w^{(t+1)}, w^{(t)}\right)$ is a distance measure between $w^{(t+1)}$ and $w^{(t)}$. We formally define the update rule which is the solution to Eq.~(\ref{opt}) when the relative entropy is used as the distance measure. 

\begin{defn}[Exponentiated gradient $EG(\eta)$ update \cite{Helmbold1998}]\label{update_rule}
With $\eta>0$, $w \in \mathbb R^n$ and $\rho \in \mathbb R^n$, we define by $EG(\eta, w, \rho) \in \mathbb R^n$ the mapping which performs the following weight update for all $i\in[n]$:
\begin{equation}
(\eta, w, \rho, i) \to  \frac{w_i\exp\left(\eta\frac{\rho_i}{w\cdot \rho}\right)}{Z},
\end{equation}
where $Z = \displaystyle\sum_{j=1}^n w_j\exp\left(\eta\frac{\rho_j}{w\cdot \rho}\right)$. We use the shorthand notation $EG(\eta)$ if the other inputs are clear from the context.
\end{defn}

In order to solve the portfolio selection problem, Reference~\cite{Helmbold1998} gave Algorithm~\ref{algo_helmbold}, which uses linear time and space (in $n$) to update $w^{(t)}$ for each $t$. We present a slightly extended version of their algorithm by including the transaction cost for investing in an asset. 

\begin{algorithm}
\caption{Online Portfolio Optimization Algorithm}
\label{algo_helmbold}
\begin{algorithmic}[1]
\Require $n$, $\eta$, $T$, 
$C$. 
\State Initialize $w^{(1)}=\left (\frac{1}{n}, \cdots, \frac{1}{n}\right)\in\mathbb{R}^n$.
\For {$t=1$ to $T$}
\State Invest in all assets according to $w^{(t)}$, with cost $C$ for each asset. 
\State Wait until end of day. 
\State Receive price relative vector $\rho^{(t)}$. 
\For {$i=1$ to $n$}
\State $w^{(t+1)}_i\gets EG(\eta, w^{(t)},  \rho^{(t)})_i$.
\EndFor 
\EndFor
\Ensure $LS_{\rm EG} := \frac{1}{T}\displaystyle\sum_{t=1}^T \log{\left(w^{(t)}\cdot\rho^{(t)}\right)} $. 
\end{algorithmic}
\end{algorithm}

The following theorem by Reference~\cite{Helmbold1998} bounds the difference in wealth gained when using a fixed portfolio vector versus the update rule Def.~(\ref{update_rule}) applied to portfolio vector initialized to be the uniform vector $\left(\frac{1}{n}, \cdots, \frac{1}{n}\right)\in\mathbb{R}^n$.

\begin{theorem}[\cite{Helmbold1998}]\label{Helmbound_bound}
Let $u\in\mathbb{R}^n_+$ be a portfolio vector, and let $\rho^{(1)}, \cdots, \rho^{(t)}$ be of price relatives with $\displaystyle\max_{i\in[n]} \rho^{(t)}_i=1$ and $\rho^{(t)}_i\geq r_{\min}>0$ for all $i\in[n], t\in[T]$, where it is assumed that $r_{\min}$ is known. Set $w^{(1)}=\left(\frac{1}{n}, \cdots, \frac{1}{n}\right)$ and $\eta = 2r_{\min}\sqrt{\frac{2\log n}{T}}$. The update in Def.~(\ref{update_rule}) produces portfolio vectors that achieve the following bound: 
\be
\displaystyle\sum_{t=1}^T \log\left(w^{(t)}\cdot \rho^{(t)}\right) & \geq & \sum_{t=1}^T \log\left(u\cdot \rho^{(t)}\right) - \frac{\sqrt{2T\log n}}{2r_{\min}}.
\ee
\end{theorem}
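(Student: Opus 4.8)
\noindent\textit{Proof proposal.} The plan is to run the standard relative-entropy potential (mirror-descent) analysis of the $EG(\eta)$ update, with the Kullback--Leibler divergence $d(u,w)=\sum_{i=1}^{n}u_i\log(u_i/w_i)$ as the potential. Abbreviate $a^{(t)}_i:=\rho^{(t)}_i/(w^{(t)}\cdot\rho^{(t)})$; since $\rho^{(t)}_i\le 1$ and, because $w^{(t)}$ is a portfolio with $\rho^{(t)}_i\ge r_{\min}$, also $w^{(t)}\cdot\rho^{(t)}\ge r_{\min}$, each $a^{(t)}_i$ lies in $[0,1/r_{\min}]$, and $w^{(t)}\cdot a^{(t)}=1$. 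By Definition~\ref{update_rule} the update reads $w^{(t+1)}_i=w^{(t)}_i e^{\eta a^{(t)}_i}/Z_t$ with $Z_t=\sum_{j}w^{(t)}_j e^{\eta a^{(t)}_j}$.

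First I would compute the one-step drop of the potential exactly,
\[
d(u,w^{(t)})-d(u,w^{(t+1)})=\sum_{i=1}^{n}u_i\log\frac{w^{(t+1)}_i}{w^{(t)}_i}=\eta\,(u\cdot a^{(t)})-\log Z_t ,
\]
and then use concavity of $\log$ to pass from the linearized objective back to the true gains: $\log(u\cdot\rho^{(t)})-\log(w^{(t)}\cdot\rho^{(t)})\le (u\cdot\rho^{(t)})/(w^{(t)}\cdot\rho^{(t)})-1=u\cdot a^{(t)}-1$, hence $u\cdot a^{(t)}\ge 1+\log(u\cdot\rho^{(t)})-\log(w^{(t)}\cdot\rho^{(t)})$.

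The crux, and the step I expect to be the main obstacle, is the upper bound on $\log Z_t$, since it pins down the constant $1/(2r_{\min})$ in the statement. Viewing $a^{(t)}$ as a random variable equal to $a^{(t)}_i$ with probability $w^{(t)}_i$, it has mean $1$ and range contained in an interval of length at most $1/r_{\min}$, so Hoeffding's lemma gives $Z_t=\mathbb{E}\!\left[e^{\eta a^{(t)}}\right]\le\exp\!\left(\eta+\eta^2/(8r_{\min}^2)\right)$, i.e.\ $\log Z_t\le\eta+\eta^2/(8r_{\min}^2)$; equivalently one may use a second-order Taylor/chord estimate of $e^{\eta x}$ on $[0,1/r_{\min}]$, taking care that this range is valid uniformly in $t$. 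Substituting the last two displays into the first gives, for each $t$,
\[
\eta\bigl(\log(w^{(t)}\cdot\rho^{(t)})-\log(u\cdot\rho^{(t)})\bigr)\ge d(u,w^{(t+1)})-d(u,w^{(t)})-\frac{\eta^2}{8r_{\min}^2}.
\]

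Finally I would sum over $t=1,\dots,T$, telescoping the potential terms, and use $d(u,w^{(T+1)})\ge 0$ together with $d(u,w^{(1)})=\log n+\sum_i u_i\log u_i\le\log n$ for the uniform start $w^{(1)}=(1/n,\dots,1/n)$. This yields $\sum_{t=1}^{T}\bigl(\log(u\cdot\rho^{(t)})-\log(w^{(t)}\cdot\rho^{(t)})\bigr)\le \log n/\eta+T\eta/(8r_{\min}^2)$, whose right-hand side is minimized at $\eta=2r_{\min}\sqrt{2\log n/T}$ --- the value fixed in the statement --- where it equals $\sqrt{2T\log n}/(2r_{\min})$, exactly the claimed bound. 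Beyond Hoeffding's lemma the only ingredients are concavity of $\log$, nonnegativity of the KL divergence, and the arithmetic of the final substitution.
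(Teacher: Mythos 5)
Your proposal is correct and follows essentially the same relative-entropy potential argument that the paper uses (the paper cites this theorem from Helmbold \emph{et al.} and reproduces the argument when proving the generalized Theorem~\ref{EG_Helmbound_bound}): telescoping $D(u\Vert w^{(t)})$, linearizing via $\log z\le z-1$, bounding $\log Z_t\le\eta+\eta^2/(8r_{\min}^2)$, and substituting $\eta=2r_{\min}\sqrt{2\log n/T}$. The only cosmetic difference is that you obtain the $\log Z_t$ bound directly from Hoeffding's lemma on a variable with range $1/r_{\min}$, whereas the paper uses the chord bound $k^{y}\le 1-(1-k)y$ followed by $\log\left(1-c\left(1-e^{y}\right)\right)\le cy+y^2/8$; both yield the same constant.
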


Thm.~\ref{Helmbound_bound} implies the corollary below. 
\begin{corollary}[Guarantee and run time of Algorithm~\ref{algo_helmbold} \cite{Helmbold1998}]\label{Helmbund_regretbound}
Algorithm~\ref{algo_helmbold} with $\eta = 2r_{\min}\sqrt{\frac{2\log n}{T}}$ achieves
\be
LS^* - LS_{\rm EG} & \leq & \frac{1}{r_{\min}} \sqrt{\frac{\log n}{2T}}.
\ee
with a total run time of $O(Tn)$ and a transaction cost of $O(TnC)$.
\end{corollary}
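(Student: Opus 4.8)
The plan is to derive the regret bound as an essentially immediate consequence of Theorem~\ref{Helmbound_bound}, and then to read off the run time and transaction cost directly from the structure of Algorithm~\ref{algo_helmbold}. First I would instantiate Theorem~\ref{Helmbound_bound} with the comparator $u$ taken to be the offline optimal portfolio $w^* = \argmax_{\{w:\lVert w\rVert_1=1\}} LS(w)$. Under the no-shortselling setting the relevant feasible set is the simplex, so $w^* \in \mathbb{R}^n_+$ with $\lVert w^*\rVert_1 = 1$, which makes it an admissible choice of $u$. The remaining hypotheses hold by construction: $\max_{i\in[n]}\rho_i^{(t)} = 1$ is the normalization in the definition of the price relative vector, $\rho_i^{(t)} \geq r_{\min} > 0$ is Assumption~\ref{assumption_r_min}, the initial portfolio $w^{(1)}$ is set to the uniform vector in the first line of the algorithm, and the learning rate is exactly $\eta = 2r_{\min}\sqrt{2\log n / T}$.

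Theorem~\ref{Helmbound_bound} then yields $\sum_{t=1}^T \log(w^{(t)}\cdot\rho^{(t)}) \geq \sum_{t=1}^T \log(w^*\cdot\rho^{(t)}) - \frac{\sqrt{2T\log n}}{2r_{\min}}$. Dividing through by $T$, the left-hand side becomes $LS_{\rm EG}$, the first term on the right becomes $LS^*$, and the error term simplifies via $\frac{1}{T}\cdot\frac{\sqrt{2T\log n}}{2r_{\min}} = \frac{1}{2r_{\min}}\sqrt{\frac{2\log n}{T}} = \frac{1}{r_{\min}}\sqrt{\frac{\log n}{2T}}$. Rearranging gives $LS^* - LS_{\rm EG} \leq \frac{1}{r_{\min}}\sqrt{\frac{\log n}{2T}}$, as claimed.

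For the complexity, the outer loop executes $T$ times, and I would account for a single iteration. One first evaluates the inner product $w^{(t)}\cdot\rho^{(t)}$ with $O(n)$ arithmetic operations; then, in the inner loop over $i\in[n]$, each unnormalized weight $w_i^{(t)}\exp(\eta\rho_i^{(t)}/(w^{(t)}\cdot\rho^{(t)}))$ costs $O(1)$ under the arithmetic model of Section~\ref{input_access} (basic arithmetic plus one exponential), so the normalizer $Z$ and all $n$ updated entries are produced in $O(n)$ time. Hence each iteration costs $O(n)$ and the total run time is $O(Tn)$. For the transaction cost, the investment step in each of the $T$ iterations buys into all $n$ assets at cost $C$ per asset, giving $O(TnC)$ in total.

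I do not expect a genuine obstacle here: the substantive content — the $O(1/\sqrt{T})$ regret guarantee for the $EG(\eta)$ update with uniform initialization — is already supplied by Theorem~\ref{Helmbound_bound}, and what remains is bookkeeping. The only points requiring a word of care are justifying that the offline optimum $w^*$ is a legitimate comparator $u$ (its nonnegativity is exactly the no-shortselling assumption) and carrying out the one-line algebraic simplification of the error term; the run-time count is routine given the stated arithmetic model.
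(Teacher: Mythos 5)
Your proposal is correct and follows exactly the route the paper intends: the paper states only that Theorem~\ref{Helmbound_bound} implies the corollary, and your instantiation with $u = w^*$, division by $T$, simplification of the error term, and direct counting of the per-iteration cost and per-asset transaction cost is precisely that implicit argument.
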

Compared to the naive bound $LS^* - LS_{\rm EG}\leq \log\left(\frac{1}{r_{\min}}\right)$, this bound decreases with $T$ and is better when $T\geq \frac{\log n}{2r_{\min}^2 \log^2\left(\frac{1}{r_{\min}}\right)}$. 
\subsection{Sampling-based 
online portfolio optimization algorithm}
We now consider including a sampling procedure, which leads to a reduction in the total transaction cost  as we only invest in the sampled assets.   
\begin{fact}[$\ell_1$-sampling \cite{Vose1991, iet:/content/journals/10.1049/el_19740097}]\label{sampling_data_structure}
Given a probability vector $p\in[0, 1]^n$, there exists a data structure that samples the index $i\in[n]$ with probability $p_i$ which can be constructed in $O(n)$ time. The time required for obtaining one sample is $O(1)$. 
\end{fact}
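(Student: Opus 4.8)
The plan is to instantiate the \emph{alias method} of Walker and Vose. Since $p$ is a probability vector, $\sum_{i=1}^n p_i = 1$. Rescale to $q_i := n\,p_i \in [0,n]$, so that the $q_i$ average to exactly $1$. The goal is to build two length-$n$ tables, $\mathrm{prob}[\cdot]\in[0,1]^n$ and $\mathrm{alias}[\cdot]\in[n]^n$, with the property that for every $i\in[n]$,
\begin{equation}
\frac{1}{n}\Bigl(\mathrm{prob}[i] + \sum_{j:\,\mathrm{alias}[j]=i}\bigl(1-\mathrm{prob}[j]\bigr)\Bigr) = p_i. \tag{$\star$}
\end{equation}
Intuitively, column $j$ is a unit-height bin split between index $j$ (mass $\mathrm{prob}[j]$) and index $\mathrm{alias}[j]$ (mass $1-\mathrm{prob}[j]$), and $(\star)$ says that the total mass deposited on $i$ across all columns equals $p_i$.

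For the construction I would maintain two worklists $\mathrm{Small}=\{i:q_i<1\}$ and $\mathrm{Large}=\{i:q_i\ge 1\}$, computable in $O(n)$ time. While both are nonempty, pop $s\in\mathrm{Small}$ and $\ell\in\mathrm{Large}$; set $\mathrm{prob}[s]\gets q_s$ and $\mathrm{alias}[s]\gets \ell$; update $q_\ell\gets q_\ell-(1-q_s)$; if now $q_\ell<1$ move $\ell$ to $\mathrm{Small}$, otherwise return it to $\mathrm{Large}$. Once one list is empty, set $\mathrm{prob}[i]\gets 1$ for every remaining $i$. Each iteration permanently finalizes one index and performs $O(1)$ work, so the build time is $O(n)$. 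For sampling, draw $j$ uniformly in $[n]$ and an independent uniform $x\in[0,1)$, and return $j$ if $x<\mathrm{prob}[j]$ and $\mathrm{alias}[j]$ otherwise; this is two $O(1)$ primitives, and by $(\star)$ one gets $\Pr[\text{output}=i]=p_i$, as required.

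The main obstacle is the correctness of the construction, specifically the invariant that whenever $\mathrm{Small}$ is nonempty there is an $\ell\in\mathrm{Large}$ available to pair with it, and that no residual $q_\ell$ ever becomes negative. Both follow from the stronger invariant that after $k$ indices have been finalized, the residual scaled masses of the $n-k$ not-yet-finalized indices sum to exactly $n-k$ (each finalized index removes exactly one unit of mass: $q_s$ of its own plus $1-q_s$ borrowed from $\ell$). Hence their average is $1$, so if some residual is $<1$ then some residual is $\ge 1$; and no residual goes negative, since $q_s<1$ forces the decrement $1-q_s<1\le q_\ell$. Finite-precision issues are ignored per the computational model of Section~\ref{input_access}. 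I would also note that the naive alternative of storing the cumulative distribution and binary-searching per query builds in $O(n)$ but costs $O(\log n)$ per sample, which is exactly why the alias construction is needed to obtain the claimed $O(1)$ per-sample time.
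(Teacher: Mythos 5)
Your proof is correct and is precisely the Walker--Vose alias method that the paper's citation for this Fact points to (the paper itself states it without proof). The construction, the mass-conservation invariant justifying that \(\mathrm{Large}\) stays nonempty while \(\mathrm{Small}\) is, and the \(O(1)\) two-primitive sampling step all match the standard argument, so nothing further is needed.
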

The assumptions of Ref.~\cite{Vose1991} allow us to omit $\log (n)$ factors in the time for construction and sampling, and we adopt the same assumption in this work. 
Based on this data structure, we construct Algorithm~\ref{algo_helmbold_sampling}. This algorithm only samples multiple assets from the portfolio vector and invests only in those assets. For the portfolio update, however, the complete vector of price relatives is used and the complete new portfolio vector is computed.  
\begin{algorithm}
\caption{Sampling-based Online Portfolio Optimization Algorithm}
\label{algo_helmbold_sampling}
\begin{algorithmic}[1]
\Require $n$, $s$, $\eta$, $T$, $C$.
\State Initialize ${w}^{(1)}=\left(\frac{1}{n}, \cdots, \frac{1}{n}\right)\in\mathbb{R}^n$.
\For {$t=1$ to $T$}
\State Prepare sampling data structure for ${w}^{(t)}$ using Fact \ref{sampling_data_structure}. 
\State Sample $i^{(t)}_1, \cdots, i^{(t)}_s$ from ${w}^{(t)}$.
\State Invest the amount $1/s$ in each asset $i^{(t)}_1, \cdots, i^{(t)}_s$ at cost $C$ each. 
\State Wait until end of day. 
\State Receive price relative vector $\rho^{(t)}$.
\For {$i=1$ to $n$}
\State $w^{(t+1)}_i\gets EG(\eta, w^{(t)},  \rho^{(t)})_i$.
\EndFor 
\EndFor
\Ensure $LS_{\text{samp}}\coloneqq  \frac{1}{T}\displaystyle\sum_{t=1}^T \log\left(\frac{1}{s}\sum_{\ell=1}^s \rho^{(t)}_{i^{(t)}_\ell}\right)$
\end{algorithmic}
\end{algorithm}
The following theorem gives a upper bound on the regret of the logarithmic wealth obtained from sampling from the exponential gradient update.
\begin{theorem}\label{regret-bound_helmold_sampling}
Let $\delta\in(0,1/2)$ and $LS_{\rm EG}$ as in Algorithm~\ref{algo_helmbold}. Algorithm~\ref{algo_helmbold_sampling} outputs $LS_{\text{samp}}$ with $LS_{EG} - LS_{\text{samp}} \leq \frac{1}{r_{\min}}\sqrt{\frac{1}{2T}}$ with success probability at least $1-2\delta$. With $\eta = 2r_{\min}\sqrt{\frac{2\log n}{T}}$, the regret bound is 
\be 
LS^* - LS_{\rm samp} \leq  \frac{2}{r_{\min}}\sqrt{\frac{\log n}{2 T}},
\ee
with success probability at least $1-2\delta$. The total run time is $O\left(T^2n\log\frac{T}{\delta}\right)$ and the transaction cost is $O\left(T^2C\log\frac{T}{\delta}\right)$.
\end{theorem}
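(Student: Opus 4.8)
The plan is to bound $LS_{\rm EG} - LS_{\rm samp}$ by controlling, at each time step $t$, the deviation of the sampled average $\frac{1}{s}\sum_{\ell=1}^s \rho^{(t)}_{i^{(t)}_\ell}$ from its expectation. The key observation is that the indices $i^{(t)}_1,\dots,i^{(t)}_s$ are drawn i.i.d.\ from the distribution $w^{(t)}$, so $\mathbb{E}\left[\rho^{(t)}_{i^{(t)}_\ell}\right] = \sum_{i=1}^n w^{(t)}_i \rho^{(t)}_i = w^{(t)}\cdot\rho^{(t)}$, and each sample lies in $[r_{\min},1]\subseteq[0,1]$. First I would apply Hoeffding's inequality to the average of the $s$ bounded i.i.d.\ terms: for any fixed $t$, $\Pr\!\left[\left|\frac{1}{s}\sum_{\ell=1}^s \rho^{(t)}_{i^{(t)}_\ell} - w^{(t)}\cdot\rho^{(t)}\right| \ge \varepsilon\right] \le 2\exp(-2s\varepsilon^2)$. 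Choosing $s = O\!\left(\frac{1}{\varepsilon^2}\log\frac{T}{\delta}\right)$ makes the failure probability at each step at most $\delta/T$, and a union bound over the $T$ steps gives that, with probability at least $1-\delta$, all $T$ sampled averages are within $\varepsilon$ of the corresponding $w^{(t)}\cdot\rho^{(t)}$. (The factor $2\delta$ rather than $\delta$ in the statement presumably absorbs a second, similar union-bound event, or is just slack carried through from how $s$ is set; I would match constants to the intended choice of $s$.)

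Next I would convert the additive closeness of the arguments into closeness of the logarithms. Since $w^{(t)}\cdot\rho^{(t)} \ge r_{\min}$ (because $\rho^{(t)}_i\ge r_{\min}$ for all $i$ and $\lVert w^{(t)}\rVert_1 = 1$) and the sampled average is likewise $\ge r_{\min} - \varepsilon$, I can use the elementary bound $\log a - \log b \le \frac{a-b}{b} \le \frac{\varepsilon}{r_{\min} - \varepsilon}$ — or, cleaner, choose $\varepsilon$ proportional to $r_{\min}$ so that the denominator stays bounded below by a constant times $r_{\min}$, yielding $|\log(\text{sampled avg}) - \log(w^{(t)}\cdot\rho^{(t)})| \le \frac{c\,\varepsilon}{r_{\min}}$ for an absolute constant $c$. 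Summing over $t$ and dividing by $T$ gives $|LS_{\rm samp} - LS_{\rm EG}| \le \frac{c\,\varepsilon}{r_{\min}}$. Setting $\varepsilon$ so that this equals $\frac{1}{r_{\min}}\sqrt{\frac{1}{2T}}$ — i.e.\ $\varepsilon = \Theta(1/\sqrt T)$ — forces $s = \Theta(T\log\frac{T}{\delta})$, which is exactly what drives the run time and transaction cost below.

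Then the regret bound follows by the triangle inequality: $LS^* - LS_{\rm samp} = (LS^* - LS_{\rm EG}) + (LS_{\rm EG} - LS_{\rm samp}) \le \frac{1}{r_{\min}}\sqrt{\frac{\log n}{2T}} + \frac{1}{r_{\min}}\sqrt{\frac{1}{2T}} \le \frac{2}{r_{\min}}\sqrt{\frac{\log n}{2T}}$, using Corollary~\ref{Helmbund_regretbound} for the first term and $\log n \ge 1$ to absorb the second (for $n\ge 3$; for small $n$ the claim is anyway about the asymptotic regime). Finally, the complexity claims: at each of the $T$ steps we build the sampling data structure in $O(n)$ time (Fact~\ref{sampling_data_structure}), draw $s = O(T\log\frac{T}{\delta})$ samples in $O(s)$ time, and perform the full $EG$ update in $O(n)$ time; since $s$ dominates $n$ only in the product, the per-step cost is $O(n + T\log\frac{T}{\delta})$, but over $T$ steps the update work is $O(Tn)$ and the sampling work is $O(T^2\log\frac{T}{\delta})$; the stated $O(T^2 n\log\frac{T}{\delta})$ is a (loose) common upper bound. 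The transaction cost is $C$ per sampled asset, i.e.\ $O(sC)$ per step and $O(T^2 C\log\frac{T}{\delta})$ total, independent of $n$ as promised.

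The main obstacle I anticipate is not any single step but pinning down the constants so that the chain $\varepsilon \to s \to$ (failure probability $2\delta$) is internally consistent with the clean target $\frac{1}{r_{\min}}\sqrt{\frac{1}{2T}}$; in particular the logarithm-Lipschitz step is only valid once $\varepsilon$ is known to be small relative to $r_{\min}$, so the argument is mildly circular and must be unwound carefully (assume $T$ large enough, or equivalently restrict to the regime where the bound is meaningful). The probabilistic core — Hoeffding plus a union bound over $T$ steps — is routine; the care is entirely in the bookkeeping of $s$, $\varepsilon$, and $\delta$.
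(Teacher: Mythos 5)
Your proposal is correct and follows essentially the same route as the paper: Hoeffding's inequality on the $s$ i.i.d.\ samples with $s=\Theta\left(T\log\frac{T}{\delta}\right)$, a union bound over the $T$ steps, the $\frac{1}{r_{\min}}$-Lipschitz continuity of $\log$ on $[r_{\min},1]$, and the triangle inequality combined with Corollary~\ref{Helmbund_regretbound}. The one worry you flag — that the Lipschitz step is ``mildly circular'' because the sampled average might dip below $r_{\min}$ — is unfounded, since each sampled $\rho^{(t)}_{i_\ell}\geq r_{\min}$ and hence the average is deterministically in $[r_{\min},1]$; your observation that the stated run time is a loose upper bound on $O\left(Tn+T^2\log\frac{T}{\delta}\right)$ is accurate.
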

\begin{proof}
Let $s \in \mathbb Z_+$. Sample $i_\ell \in[n]$ with probability $w_{i_\ell}^{(t)}$ for all $l \in[s]$.
Define the random variable $Z^{(t)} = \frac{1}{s}\displaystyle\sum_{\ell=1}^s \rho^{(t)}_{i_\ell}$. Then, its expectation is, using the shorthand notation $\mathbb E = \mathbb E_{i_1,\cdots,i_s}$,
\be
\mathbb E\left[Z^{(t)}\right] = \mathbb E\left[\frac{1}{s}\sum_{\ell=1}^s \rho^{(t)}_{i_\ell}\right] = \frac{1}{s}\sum_{\ell=1}^s \mathbb E\left[\rho^{(t)}_{i_\ell}\right] = \mathbb E_i\left[\rho^{(t)}_{i}\right] = \sum_{i=1}^n w^{(t)}_{i}\cdot \rho^{(t)}_{i} = w^{(t)}\cdot\rho^{(t)}.
\ee
Using Hoeffding's inequality (see Fact~\ref{Hoeffding}) with $q>0$, we obtain
\be\label{bound}
\mathbb P\left[\left\vert s\cdot Z^{(t)} - s\cdot \mathbb E\left[Z^{(t)}\right]\right\vert\geq sq\right]\leq 2e^{-\frac{sq^2}{(1-r_{\min})^2}}\leq \frac{2\delta}{T},
\ee
when we set $q = \sqrt{\frac{1}{2T}}$ and $s = 2T(1-r_{\min})^2\log\frac{T}{\delta}$.
For the success probability, we  hence obtain $1-\frac{2\delta}{T}$. Now we bound
\be
LS_{EG} - LS_{\text{samp}} 
& = & \frac{1}{T}\displaystyle\sum_{t=1}^T \log\left(w^{(t)}\cdot\rho^{(t)}\right) - \frac{1}{T}\displaystyle\sum_{t=1}^T \log\left(\frac{1}{s}\displaystyle\sum_{\ell=1}^s \rho^{(t)}_{i_\ell}\right)\\
& \leq & \frac{1}{T}\displaystyle\sum_{t=1}^T \left\vert \log\left(w^{(t)}\cdot\rho^{(t)}\right) - \log\left(\frac{1}{s}\displaystyle\sum_{\ell=1}^s \rho^{(t)}_{i_\ell}\right)\right\vert\\
\text{(by Lipschitz continuity)} & \leq & \frac{1}{Tr_{\min}}\displaystyle\sum_{t=1}^T \left\vert w^{(t)}\cdot\rho^{(t)} - \frac{1}{s}\displaystyle\sum_{\ell=1}^s \rho^{(t)}_{i_\ell}\right\vert\\
\text{(by Eq.(\ref{bound}))} & \leq & \frac{1}{r_{\min}}\sqrt{\frac{1}{2T}},
\ee
with probability $1 - 2\delta$ by the union bound. Therefore, the regret is bounded by 
\be
LS^* - LS_{\text{samp}} & = &  LS^* - LS_{\rm EG} + LS_{\rm EG} - LS_{\text{samp}} \\
& \leq & \frac{1}{r_{\min}}\sqrt{\frac{\log n}{2T}} + \frac{1}{r_{\min}}\sqrt{\frac{1}{2T}}\\
& \leq & \frac{2}{r_{\min}}\sqrt{\frac{\log n}{2T}},
\ee
which holds with probability at least $1-2\delta$.
\end{proof}

The performance of the algorithm worsens slightly in three regards.
Firstly, we obtain a constant factor to the regret bound. 
Secondly, the algorithm is probabilistic and we obtain a $\log 
\left(\frac{1}{\delta}\right)$ dependence in the run time, where $2\delta$ is the failure probability of the algorithm. Usually, this failure probability can be taken as some small constant such as $0.001$ to obtain a $99.9\%$ confidence that the algorithm ran correctly. 
Thirdly, we obtain a $T^2$ dependence in the run time due to the multi-sampling step. 
The benefit of the algorithm is that the transaction cost is reduced from $Tn C$ to $O\left(T^2C\log\left(\frac{T}{\delta}\right)\right)$. 

\subsection{Convergence theorem for erroneous updates}
Before we move to an approximate classical algorithm and our quantum algorithm, we generalize the convergence result from the original work. The generalizations are in terms of the availability of the inner product and the normalization factor, both of which will be known only approximately in the quantum algorithm.
The generalization is embodied in the following definition of an erroneous update rule.
\begin{defn}[Erroneous exponentiated gradient update]\label{update_rule_erroneous_norm}
Let $\eta>0$, $w \in \mathbb R^n$, and $\rho \in \mathbb R^n$. In addition, let $\epsilon_I \in(0,1/2)$ and $\tilde I >0$ such that $\vert \tilde I - I \vert \leq I \epsilon_I$, where $I:= w\cdot \rho$.
Moreover, let $\epsilon_Z \in (0,1/2)$ and $\tilde Z >0$ such that $\vert \tilde Z - Z \vert \leq Z \epsilon_Z$,
where $Z = \displaystyle\sum_{j=1}^n w_j\exp\left(\eta\frac{\rho_j}{\tilde I}\right)$.
We define the erroneous weight update $EEG(\eta, w, \rho, \tilde I, \tilde Z) \in \mathbb R^n$ as the mapping which computes for $i \in [n]$
\begin{equation}
(\eta, w, \rho, \tilde I, \tilde Z,i) \to \frac{w_i\exp\left(\eta\frac{\rho_i}{\tilde I}\right)}{\tilde Z}.
\end{equation}
We use the shorthand notation $EEG(\eta)$ if the other inputs are clear from the context.
\end{defn}
The main theorem for this update rule is as follows, for which we modify the proof of Theorem \ref{Helmbound_bound} from Ref.~\cite{Helmbold1998}.
\begin{theorem}[Main convergence theorem for erroneous updates]\label{EG_Helmbound_bound}
Let $u\in\mathbb{R}^n_+$ be a portfolio vector, and let $\rho^{(1)}, \cdots, \rho^{(t)}$ be price relatives with $\max_{i\in[n]} \rho^{(t)}_i=1$ and $\rho^{(t)}_i\geq r_{\min}>0$ for all $i\in[n], t\in[T]$.
With $w^{(1)}=\left(\frac{1}{n}, \cdots, \frac{1}{n}\right)$ and $\eta = 2r_{\min}\sqrt{\frac{2\log n}{T}}$, the update in Def.~(\ref{update_rule_erroneous_norm}) produces portfolio vectors that achieve the bound
\be
\displaystyle\sum_{t=1}^T \log\left(w^{(t)}\cdot \rho^{(t)}\right) &\geq& \sum_{t=1}^T \log\left(u\cdot \rho^{(t)}\right) - \frac{3\sqrt{2T\log n}}{r_{\min}},
\ee
when $\epsilon_I = \frac{3\eta}{4 r_{\min}}$ and $\epsilon_Z = 0$, 
and
\be 
\displaystyle\sum_{t=1}^T \log\left(w^{(t)}\cdot \rho^{(t)}\right)&\geq& \sum_{t=1}^T \log\left(u\cdot \rho^{(t)}\right) - \frac{5\sqrt{2T\log n}}{r_{\min}},
\ee
when $\epsilon_I = \frac{3\eta}{4 r_{\min}}$ and $\epsilon_Z = \frac{\eta^2}{r^2_{\min}}$.
\end{theorem}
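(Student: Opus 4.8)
The plan is to follow the structure of the original proof of Theorem~\ref{Helmbound_bound} from Ref.~\cite{Helmbold1998}, which bounds the one-step progress of the potential function $d(u, w^{(t)})$ (the relative entropy) and then telescopes over $t=1,\dots,T$. The key quantity is the change $d(u, w^{(t)}) - d(u, w^{(t+1)})$ under one application of the update. For the exact $EG$ update this change is lower-bounded by a multiple of $\log(w^{(t)}\cdot\rho^{(t)}) - \log(u\cdot\rho^{(t)})$ minus an error term of order $\eta^2$; summing telescopes the potential, which is bounded between $0$ and $\log n$, yielding the stated regret. Here I must redo the same computation for the \emph{erroneous} update $EEG(\eta, w^{(t)}, \rho^{(t)}, \tilde I^{(t)}, \tilde Z^{(t)})$ and track how the two perturbations $\epsilon_I$ and $\epsilon_Z$ propagate.

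First I would expand $d(u, w^{(t)}) - d(u, w^{(t+1)})$ explicitly: with $w_i^{(t+1)} = w_i^{(t)} \exp(\eta \rho_i^{(t)}/\tilde I^{(t)})/\tilde Z^{(t)}$, this difference equals $\sum_i u_i \log(w_i^{(t+1)}/w_i^{(t)}) = \frac{\eta}{\tilde I^{(t)}}\, u\cdot\rho^{(t)} - \log \tilde Z^{(t)}$ (using $\|u\|_1=1$). Next I would bound $\log \tilde Z^{(t)}$: using $\vert \tilde Z - Z\vert \le Z\epsilon_Z$ gives $\log\tilde Z^{(t)} \le \log Z^{(t)} + \log(1+\epsilon_Z) \le \log Z^{(t)} + \epsilon_Z$, and then apply the standard inequality $e^{ax}\le 1 + x(e^a-1)$ for $x\in[0,1]$ together with $\log(1+y)\le y$ to get $\log Z^{(t)} \le (e^{\eta/\tilde I^{(t)}} - 1)\, (w^{(t)}\cdot\rho^{(t)}/\tilde I^{(t)})$ — or, more carefully, I would keep the second-order Taylor bound $e^{a}-1 \le a + a^2$ valid for small $a$, since $\eta/\tilde I^{(t)}$ is $O(\eta/r_{\min})$ which is small for large $T$. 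Combining, the one-step progress is at least $\frac{\eta}{\tilde I^{(t)}}(u\cdot\rho^{(t)} - w^{(t)}\cdot\rho^{(t)}) - \frac{\eta^2}{(\tilde I^{(t)})^2}(w^{(t)}\cdot\rho^{(t)}) - \epsilon_Z$, up to the precise form of the quadratic remainder.

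The step that replaces the exact inner product $I^{(t)}=w^{(t)}\cdot\rho^{(t)}$ by $\tilde I^{(t)}$ inside the $\frac{\eta}{\tilde I^{(t)}}$ factor is where $\epsilon_I$ enters: I would use $\frac{1}{\tilde I^{(t)}} \ge \frac{1}{I^{(t)}(1+\epsilon_I)} \ge \frac{1-\epsilon_I}{I^{(t)}}$ (or $\frac{1}{\tilde I^{(t)}(1-\epsilon_I)} \ge \frac{1}{I^{(t)}} \ge \ldots$ depending on sign) to convert $\frac{\eta}{\tilde I^{(t)}} u\cdot\rho^{(t)}$ back into $\frac{\eta}{I^{(t)}} u\cdot\rho^{(t)}$ with a multiplicative $(1\pm\epsilon_I)$ slack, and similarly for the $w^{(t)}\cdot\rho^{(t)}$ term where the ratio $w^{(t)}\cdot\rho^{(t)}/\tilde I^{(t)}$ is within $1\pm\epsilon_I$ of $1$. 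The crucial observation is that $\log(w^{(t)}\cdot\rho^{(t)}) - \log(u\cdot\rho^{(t)}) \le \frac{w^{(t)}\cdot\rho^{(t)} - u\cdot\rho^{(t)}}{w^{(t)}\cdot\rho^{(t)}}$ by concavity of $\log$, together with the bounds $r_{\min} \le w^{(t)}\cdot\rho^{(t)} \le 1$ and $u\cdot\rho^{(t)}\le 1$, lets me relate $\frac{\eta}{I^{(t)}}(u\cdot\rho^{(t)} - w^{(t)}\cdot\rho^{(t)})$ to $\eta(\log(u\cdot\rho^{(t)}) - \log(w^{(t)}\cdot\rho^{(t)}))$. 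Each $\epsilon_I$- and $\epsilon_Z$-induced discrepancy contributes an extra additive term of order $\eta\epsilon_I$ or $\epsilon_Z$ per time step (using $u\cdot\rho^{(t)}, w^{(t)}\cdot\rho^{(t)} \le 1$ and $\ge r_{\min}$ to bound the prefactors); with the prescribed choices $\epsilon_I = \frac{3\eta}{4r_{\min}}$ and $\epsilon_Z \in \{0, \eta^2/r_{\min}^2\}$ these are all $O(\eta^2/r_{\min})$ per step.

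Finally I would telescope: summing the one-step bound over $t$, the potential terms collapse to $d(u, w^{(1)}) - d(u, w^{(T+1)}) \le d(u, w^{(1)}) \le \log n$ (since $w^{(1)}$ is uniform and relative entropy is nonnegative), giving $\eta\sum_t(\log(u\cdot\rho^{(t)}) - \log(w^{(t)}\cdot\rho^{(t)})) \le \log n + c\, T\eta^2/r_{\min}^2$ for an explicit constant $c$ that is larger in the $\epsilon_Z\neq 0$ case. Dividing by $\eta$ and substituting $\eta = 2r_{\min}\sqrt{2\log n/T}$ balances the two terms and produces a bound of the form $\frac{\text{const}\cdot\sqrt{2T\log n}}{r_{\min}}$; I would check that the accumulated constants come out to exactly $3$ in the first case and $5$ in the second. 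The main obstacle I anticipate is bookkeeping: keeping the numerous $(1\pm\epsilon_I)$ factors and the second-order Taylor remainders under control simultaneously, so that the final constant is genuinely $3$ (resp.\ $5$) rather than something larger — in particular verifying that $\epsilon_I = \frac{3\eta}{4r_{\min}} < 1/2$ holds for the relevant range of $T$ and that the quadratic-in-$\eta$ error terms dominate the linear-in-$\epsilon_I$ ones so the substitution of $\eta$ still yields a clean $\sqrt{\log n / T}$ rate.
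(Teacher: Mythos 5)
Your proposal follows essentially the same route as the paper's proof: the relative-entropy potential $D(u\Vert w^{(t)})$, the one-step expansion $-\eta\,u\cdot\rho^{(t)}/\tilde I^{(t)}+\log\tilde Z^{(t)}$, the bound $\log\tilde Z^{(t)}\leq\epsilon_Z+\log Z^{(t)}$ combined with the standard convexity/Taylor inequalities, the conversion of $1/\tilde I^{(t)}$ to $(1\pm O(\epsilon_I))/I^{(t)}$, and the telescoping with $D(u\Vert w^{(1)})\leq\log n$ followed by substituting $\eta=2r_{\min}\sqrt{2\log n/T}$. The plan is correct and matches the paper's argument, with only cosmetic differences in which form of the second-order remainder inequality is invoked.
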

\begin{proof}
Let $D\left(u\lVert v\right) \coloneqq \displaystyle\sum_{i=1}^n u_i\log \left(\frac{u_i}{v_i}\right)$. Fix $w^{(t)}$ and let $w^{(t+1)} = EEG(\eta, w^{(t)},\rho^{(t)},\tilde I^{(t)}, \tilde Z^{(t)})$, with $\tilde I^{(t)},\tilde Z^{(t)}$ as in Def.~(\ref{update_rule_erroneous_norm}). 
Let $\Delta_t = D(u\lVert w^{(t+1)}) - D(u\lVert w^{(t)})$. Then 
\be\label{change}
\Delta_t & = & -\displaystyle\sum_{i=1}^n u_i\log\left (\frac{w_{i}^{(t+1)}}{w_{i}^{(t)}}\right)\\
& =  & -\displaystyle\sum_{i=1}^n u_i \left(\eta\left(\frac{\rho^{(t)}_i}{\tilde I^{(t)} }\right) - \log \tilde Z^{(t)} \right)\\
& = & -\eta\frac{u\cdot \rho^{(t)}}{\tilde I^{(t)}} + \log \tilde Z^{(t)}\\
& \leq & -\eta\left(1-\frac{4}{3}\epsilon_I\right)\frac{u\cdot \rho^{(t)}}{I^{(t)}} + \log \tilde Z^{(t)}. 
\ee
Since $\rho^{(t)}_i\in[0,1]$ and  $k^y\leq 1-1(1-k)y$ for $k>0$ and $y\in[0,1]$, we have  
\be
\tilde Z^{(t)} & \leq & (1+\epsilon_Z) Z^{(t)} =  (1+\epsilon_Z)\displaystyle\sum_{i=1}^n w_{i}^{(t)} e^{\eta \frac{\rho^{(t)}_i}{\tilde I^{(t)}}} \leq (1+\epsilon_Z)\displaystyle\sum_{i=1}^n w^{(t)}_{i} \left(1-\left(1-e^{\frac{\eta}{\tilde I}}\right)\rho^{(t)}_i\right)\\
& = & (1+\epsilon_Z)\left(1 - \left(1- e^{\frac{\eta}{\tilde I^{(t)}}}\right)\tilde I^{(t)}\right).
\ee
Using the fact that $\log\left(1-c\left(1-e^y\right)\right)\leq c y + \frac{y^2}{8}$ for all $c\in[0, 1]$ and $y\in\mathbb{R}$, $\log \tilde Z_t$ can be bounded by
\be 
\log \tilde Z^{(t)} 
& = & \log\left(1+\epsilon_Z\right) + \log\left(1 - \left(1- e^{\frac{\eta}{\tilde I^{(t)}}}\right)\tilde I^{(t)}\right)
\leq \epsilon_Z + \eta + \frac{\eta^2}{8(\tilde I^{(t)})^2}\\
& \leq & \epsilon_Z + \eta + \frac{\eta^2}{8 (I^{(t)})^2} \left(1+\frac{4}{3}\epsilon_I\right)^2 
\leq \epsilon_Z + \eta + \frac{2\eta^2}{9 (I^{(t)})^2} \leq \epsilon_Z + \eta + \frac{2\eta^2}{9r^2_{\min}},
\ee
which is true as $\log(1+\epsilon_Z)\leq \epsilon_Z$ since $\epsilon_Z\geq 0$ and using $\rho^{(t)}_i\geq r_{\min}$. Combining with Eq.~(\ref{change}), we have   
\be
\Delta_t & \leq & -\eta\left(1-\frac{4}{3}\epsilon_I\right)\frac{u\cdot\rho^{(t)}}{I^{(t)}} + \epsilon_Z + \eta + \frac{2\eta^2}{9 r^2_{\min}}\\
& = & \eta\left(1 - \left(1-\frac{4}{3}\epsilon_I\right)\frac{u\cdot\rho^{(t)}}{I^{(t)}}\right) + \epsilon_Z +  \frac{2\eta^2}{9 r^2_{\min}}\\
& \leq & -\eta \log\left(\frac{u\cdot\rho^{(t)}}{I^{(t)}}\right) + \frac{4\eta}{3}\epsilon_I \frac{u\cdot\rho^{(t)}}{I^{(t)}} + \epsilon_Z  + \frac{2\eta^2}{9 r^2_{\min}},\\
& \leq & -\eta \log\left(\frac{u\cdot\rho^{(t)}}{I^{(t)}}\right) + \epsilon_I \frac{4\eta}{3r_{\min}} + \epsilon_Z + \frac{2\eta^2}{9 r^2_{\min}},
\ee
where we use the fact that $1-e^x\leq -x$ for all $x$. Using a telescoping sum over $t\in[T]$, we have 
\be
-D\left(u\Vert w^{(1)}\right) & \leq & D\left(u\lVert w^{(T+1)}\right) - D\left(u\Vert w^{(1)}\right)\\
& \leq & \eta\displaystyle\sum_{t=1}^T \left(\left(w^{(t)}\cdot \rho^{(t)}\right) - \log\left(u\cdot \rho^{(t)}\right) \right) + \epsilon_I \frac{4\eta T}{3r_{\min}} + T \epsilon_Z +   \frac{2\eta^2 T}{9r_{\min}^2}.
\ee
Rearranging the terms, we obtain 
\be
\displaystyle\sum_{t=1}^T \log\left(w^{(t)}\cdot \rho^{(t)}\right) & \geq & \sum_{t=1}^T \log\left(u\cdot \rho^{(t)}\right) - \frac{D(u\lVert w^{(1)})}{\eta} - \epsilon_I \frac{4 T}{3r_{\min}} - \frac{T\epsilon_Z}{\eta} -  \frac{2\eta T}{9r_{\min}^2}.
\ee
Since we let $w^{(t1)} = \left(\frac{1}{n}, \cdots, \frac{1}{n}\right)$, we have $D(u\lVert w^{(1)})\leq \log n$. Setting
\be
\epsilon_I = \frac{3\eta}{4 r_{\min}}, \hspace{1cm} \epsilon_Z =0,
\ee
gives 
\be
\displaystyle\sum_{t=1}^T \log\left(w^{(t)}\cdot \rho^{(t)}\right)\geq \sum_{t=1}^T \log\left(u\cdot \rho^{(t)}\right) - \frac{3\sqrt{2T\log n}}{r_{\min}}
\ee
Setting 
\be
\epsilon_I = \frac{3\eta}{4 r_{\min}}, \hspace{1cm} \epsilon_Z = \frac{\eta^2}{r^2_{\min}},
\ee
gives
\be
\displaystyle\sum_{t=1}^T \log\left(w^{(t)}\cdot \rho^{(t)}\right)\geq \sum_{t=1}^T \log\left(u\cdot \rho^{(t)}\right) - \frac{5\sqrt{2T\log n}}{r_{\min}}
\ee
Thus, we obtain the desired bounds. 
\end{proof}

\subsection{Classically-sampled inner product}
We show an algorithm where we classically sample the inner product. This algorithm does not offer any reduction in the run time, and the transaction cost is the same as in Algorithm~\ref{algo_helmbold_sampling}. We present the algorithm to provide a gradual transition to the quantum algorithm, as the correctness analysis will be similar for the quantum algorithm.
First, 
restate a lemma on classically-sampled inner products as follows. 
We would like to highlight that for all $t\in[T]$, the strategies $w^{(t)}$ are the same for Algorithm~\ref{algo_helmbold} and  Algorithm~\ref{algo_helmbold_sampling}, but, due to this erroneous update, are different for the following Algorithm~\ref{algo_helmbold_sampling_inner} and Algorithm~\ref{algo}. 
The following lemma estimates inner products with relative error. As in our portfolio setting, the vector $x$ has a lower bound for its entries.
\begin{lem}[Inner product estimation]\label{IP_est} 
Let $\delta\in(0,1)$. Given query access to $x\in[x_{\min}, 1]^n$ and $\ell_1$-sampling access to a probability vector $p\in[0, 1]^n$, we can determine, with success probability at least  $1-\delta$, the inner product $\alpha := p\cdot x \in [x_{\min},1]$ to multiplicative error $\epsilon_I \leq x_{\min}$, with $O\left(\frac{1}{\epsilon_I^2 x_{\min}}\log\frac{1}{\delta}\right)$  queries and samples, and $\tilde{\mathcal{O}}\left(\frac{1}{\epsilon_I^2 x_{\min}}\log\frac{1}{\delta}\right)$
time complexity.
\end{lem}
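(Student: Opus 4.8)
The plan is to estimate $\alpha = p\cdot x = \sum_{i=1}^n p_i x_i$ by plain Monte Carlo averaging: using the $\ell_1$-sampling structure for $p$, draw i.i.d. indices $i_1,\dots,i_s \in [n]$ with $\mathbb P[i_\ell = i] = p_i$, query $x_{i_\ell}$ for each $\ell$, and output $\tilde\alpha := \frac1s\sum_{\ell=1}^s x_{i_\ell}$. By linearity, $\mathbb E[\tilde\alpha] = \sum_{i=1}^n p_i x_i = \alpha$; moreover, since every $x_i\in[x_{\min},1]$ and $p$ is a probability vector, $\alpha\in[x_{\min},1]$, and $\tilde\alpha\in[x_{\min},1]$ holds deterministically as it is an average of entries of $x$.

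The core step is a concentration bound giving \emph{relative} accuracy. Each $x_{i_\ell}$ lies in $[0,1]$, so $s\tilde\alpha = \sum_{\ell=1}^s x_{i_\ell}$ is a sum of $s$ independent $[0,1]$-valued random variables with mean $\mu = s\alpha$. Applying the multiplicative Chernoff bound with deviation parameter $\epsilon_I\in(0,1]$ (legitimate since $\epsilon_I \le x_{\min}\le 1$) yields
\[
\mathbb P\big[\,|\tilde\alpha - \alpha| \ge \epsilon_I\,\alpha\,\big] \;\le\; 2\exp\!\Big(-\tfrac{1}{3}\epsilon_I^2\, s\, \alpha\Big) \;\le\; 2\exp\!\Big(-\tfrac{1}{3}\epsilon_I^2\, s\, x_{\min}\Big),
\]
where the last inequality uses $\alpha\ge x_{\min}$. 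Choosing $s = \big\lceil \tfrac{3}{\epsilon_I^2 x_{\min}}\log\tfrac{2}{\delta}\big\rceil = O\!\big(\tfrac{1}{\epsilon_I^2 x_{\min}}\log\tfrac1\delta\big)$ makes the right-hand side at most $\delta$; on the complementary event $|\tilde\alpha - \alpha|\le \epsilon_I\alpha$, i.e. $\tilde\alpha$ is a multiplicative-$\epsilon_I$ estimate of $\alpha\in[x_{\min},1]$. Crucially, $s$ is fixed \emph{a priori} from the known bound $x_{\min}$ alone and does not depend on the unknown $\alpha$.

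For the resources: the procedure uses $s$ samples from the data structure of Fact~\ref{sampling_data_structure}, at $O(1)$ each, and $s$ queries to $x$, giving $O\!\big(\tfrac{1}{\epsilon_I^2 x_{\min}}\log\tfrac1\delta\big)$ queries and samples. Maintaining the running sum and performing the final division incur only a polylogarithmic overhead in the relevant parameters, absorbed into $\tilde{\mathcal O}(\cdot)$, so the time complexity is $\tilde{\mathcal O}\!\big(\tfrac{1}{\epsilon_I^2 x_{\min}}\log\tfrac1\delta\big)$.

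The one place that needs care — and the reason the statement requires $\epsilon_I \le x_{\min}$ together with a vector $x$ bounded below — is obtaining the $1/x_{\min}$ rather than $1/x_{\min}^2$ dependence: a naive additive Hoeffding bound at tolerance $\epsilon_I x_{\min}$ would only give $O\big(1/(\epsilon_I^2 x_{\min}^2)\cdot\log(1/\delta)\big)$ samples. The saving comes from using a relative tail bound, which exploits that $\mathrm{Var}[x_{i_\ell}] \le \mathbb E[x_{i_\ell}^2] \le \mathbb E[x_{i_\ell}] = \alpha$; I would present the multiplicative Chernoff form since it is cleanest and its sample count is independent of $\alpha$ (a Bernstein-type bound gives the same scaling and would work equally well).
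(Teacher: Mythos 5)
Your proof is correct, but it takes a genuinely different route from the paper's. The paper estimates $\alpha$ via a median-of-means estimator, exploiting the second-moment bound $\mathrm{Var}(X)\le \mathbb E[X^2]\le\mathbb E[X]=\alpha$ to get additive error $\epsilon\sqrt{\alpha}$, and then converts this to relative error by a two-stage adaptive scheme: a first run with precision $\epsilon=\epsilon_I$ produces a coarse estimate $\widetilde{\alpha}_1$, and a second run with precision $\epsilon=\epsilon_I\sqrt{\widetilde{\alpha}_1}/2$ yields $|\alpha-\widetilde{\alpha}_2|\le\epsilon_I\alpha$; it is in this last chain of inequalities that the hypothesis $\epsilon_I\le x_{\min}$ is actually used. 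You instead take a single-shot empirical mean of $s$ i.i.d.\ samples and invoke the multiplicative Chernoff bound for independent $[0,1]$-valued variables, fixing $s=O\bigl(\tfrac{1}{\epsilon_I^2 x_{\min}}\log\tfrac1\delta\bigr)$ a priori from the worst case $\alpha\ge x_{\min}$. Both arguments are sound and meet the stated query and time bounds. Your version is simpler and non-adaptive, and as you note it only needs $\epsilon_I\le 1$ rather than $\epsilon_I\le x_{\min}$ (your closing remark slightly misattributes the role of that hypothesis, which is an artifact of the paper's two-stage argument rather than something intrinsic to achieving the $1/x_{\min}$ scaling); the paper's adaptive scheme buys instance-dependence, since its second stage costs $O\bigl(\tfrac{1}{\epsilon_I^2\alpha}\log\tfrac1\delta\bigr)$ and is therefore cheaper than the worst-case bound whenever $\alpha\gg x_{\min}$.
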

\begin{proof}
Consider the additive version of this lemma as given in Reference~\cite{tang2019quantum, Rebentrost2021,chia2020sampling, chia2018quantum}, adapted to the $\ell_1$ case: Let $X$ be a random variable with outcome $x_j$ with probability $p_j$.
Note that $\mathbb{E}[X] = \displaystyle\sum_j p_j x_j = \alpha$ and $Var(X) \leq \displaystyle\sum_j x_j^2 p_j \leq \alpha$. Apply the median-of-means method \cite{chia2020sampling} on $\frac{27}{\epsilon^2}\log \frac{1}{\delta}$ samples of $X$
to be within $\epsilon \sqrt{Var(X)} \leq \epsilon \sqrt{\alpha}$ of $p \cdot x$
with probability at least $1-\frac{\delta}{2}$ using $O\left(\frac{1}{\epsilon^2} \log \frac{1}{\delta}\right)$ queries.

For the multiplicative estimation, run the above algorithm with the precision parameter being set to $\epsilon = \epsilon_I$. We obtain an estimate  $\widetilde{\alpha}_1$ of the inner product $\alpha = p\cdot x$ such that $|\alpha - \widetilde{\alpha}_1| \leq \epsilon_I \sqrt{\alpha}$, Then, re-run the algorithm with precision $\epsilon = \epsilon_I \sqrt{\widetilde{\alpha}_1}/2$. We will in turn obtain an estimate $\widetilde{\alpha}_2$ such that
\be
|\alpha - \widetilde{\alpha}_2| & \leq & \frac{\epsilon_I  \sqrt{\widetilde{\alpha}_1}}{2} \sqrt{\alpha} \leq \frac{\epsilon_I\sqrt{\alpha + \epsilon_I}}{2} \sqrt{\alpha} \leq \epsilon_I\alpha.
\ee
This costs $O\left(\frac{1}{\epsilon_I^2 x_{\min}} \log \frac{1}{\delta}\right)$ queries to obtain the desired guarantee. 
\end{proof}

\begin{algorithm}
\caption{Approximate sampling-based Online Portfolio Optimization Algorithm}
\label{algo_helmbold_sampling_inner}
\begin{algorithmic}[1]
\Require $n$, $s$, $\eta$, $T$, $C$, $\delta$.
\State Initialize $w^{(1)}=(\frac{1}{n}, \cdots, \frac{1}{n})\in\mathbb{R}^n$.
\For {$t=1$ to $T$}
\State Prepare sampling data structure for $w^{(t)}$ using Fact \ref{sampling_data_structure}. 
\State Sample $i^{(t)}_1, \cdots, i^{(t)}_s$ from $w^{(t)}$.
\State Invest the amount $1/s$ in each asset $i^{(t)}_1, \cdots, i^{(t)}_s$ at cost $C$ each. 
\State Wait until end of day. 
\State Receive price relative vector $\rho^{(t)}$.
\State $\tilde I^{(t)} \gets$ estimate inner product $w^{(t)}\cdot\rho^{(t)}$ via Lemma~\ref{IP_est} with relative error $\epsilon_I=\frac{3\eta}{4 r_{\min}}$ and success probability $1-\frac{\delta}{T}$.
\For {$i=1$ to $n$}
\State $w^{(t+1)}_{i}\gets EEG(\eta,  w^{(t)},  \rho^{(t)})_i$, with $\tilde I^{(t)}$ and exact norm computation.
\EndFor 
\EndFor
\Ensure $LS^C_{\text{samp}}\coloneqq \frac{1}{T}\displaystyle\sum_{t=1}^T \log \left(\frac{1}{s}\displaystyle\sum_{\ell=1}^s \rho^{(t)}_{i_\ell^{(t)}}  \right)$. 
\end{algorithmic}
\end{algorithm}
Thm.~\ref{algo_helmbold_sampling_inner} implies the corollary below. 
\begin{corollary}[Guarantee and run time of Algorithm~\ref{EG_Helmbound_bound}]\label{apprx_Helmbund_regretbound}
Let $\delta \in (0,1/3)$.
Algorithm~\ref{EG_Helmbound_bound} with $\eta = 2r_{\min}\sqrt{\frac{2\log n}{T}}$, achieves
\be\label{regret_LSsampC}
LS^* - LS^C_{\text{samp}} & \leq & \frac{8}{r_{\min}}\sqrt{\frac{\log n}{2T}}.
\ee
with success probability at least $1-3\delta$
in $O\left(T n + \frac{T^2}{r_{\min}}\log\frac{1}{\delta} + T^2\log\frac{T}{\delta}\right)$ time and incurs a transaction cost of $O\left(T^2C\log\frac{T}{\delta}\right)$.
\end{corollary}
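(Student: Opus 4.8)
The plan is to decompose the regret as
\[
LS^* - LS^C_{\mathrm{samp}} \;=\; \bigl(LS^* - LS^C_{\mathrm{EG}}\bigr) \;+\; \bigl(LS^C_{\mathrm{EG}} - LS^C_{\mathrm{samp}}\bigr),
\]
where $LS^C_{\mathrm{EG}} := \frac{1}{T}\sum_{t=1}^T \log\!\left(w^{(t)}\cdot\rho^{(t)}\right)$ is computed from the portfolio vectors $w^{(t)}$ actually produced by Algorithm~\ref{algo_helmbold_sampling_inner}. The first term is controlled by Theorem~\ref{EG_Helmbound_bound} (the convergence theorem for erroneous updates), and the second by exactly the Hoeffding argument used in the proof of Theorem~\ref{regret-bound_helmold_sampling}.

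For the first term, I would first observe that Algorithm~\ref{algo_helmbold_sampling_inner} realizes the erroneous update of Definition~\ref{update_rule_erroneous_norm} with $\epsilon_Z = 0$: the normalization is computed exactly, so by induction each $w^{(t)}$ is a probability vector and hence $I^{(t)} := w^{(t)}\cdot\rho^{(t)} \in [r_{\min},1]$. Applying Lemma~\ref{IP_est} with $x = \rho^{(t)} \in [r_{\min},1]^n$, $p = w^{(t)}$, $x_{\min} = r_{\min}$, multiplicative error $\epsilon_I = \frac{3\eta}{4r_{\min}}$, and failure probability $\delta/T$ gives an estimate with $|\tilde I^{(t)} - I^{(t)}| \le \epsilon_I I^{(t)}$; here one checks that in the regime $T = \Omega(\log n / r_{\min}^2)$ — exactly the regime in which the bound is nontrivial — we have $\epsilon_I = \frac{3}{2}\sqrt{2\log n / T} \le r_{\min}$ and $\epsilon_I < 1/2$, as required by Lemma~\ref{IP_est} and Definition~\ref{update_rule_erroneous_norm}. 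A union bound over the $T$ estimations shows that with probability at least $1-\delta$ all of them meet the error bound, and on that event the $\epsilon_Z = 0$ branch of Theorem~\ref{EG_Helmbound_bound} with $\eta = 2r_{\min}\sqrt{2\log n/T}$ gives $\sum_t \log(w^{(t)}\cdot\rho^{(t)}) \ge \sum_t \log(u\cdot\rho^{(t)}) - \frac{3\sqrt{2T\log n}}{r_{\min}}$ for every portfolio $u$. Taking $u = w^*$ (the offline optimum) and dividing by $T$ yields $LS^* - LS^C_{\mathrm{EG}} \le \frac{6}{r_{\min}}\sqrt{\frac{\log n}{2T}}$.

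For the second term, I would reuse the Hoeffding computation from the proof of Theorem~\ref{regret-bound_helmold_sampling}: with $Z^{(t)} = \frac{1}{s}\sum_{\ell=1}^s \rho^{(t)}_{i^{(t)}_\ell}$ one has $\mathbb{E}[Z^{(t)} \mid w^{(t)}] = w^{(t)}\cdot\rho^{(t)}$, and with $q = \sqrt{1/(2T)}$ and $s = 2T(1-r_{\min})^2\log(T/\delta)$, Hoeffding gives $|Z^{(t)} - w^{(t)}\cdot\rho^{(t)}| \le q$ with (conditional, hence unconditional) failure probability at most $2\delta/T$; since the investment samples are drawn independently of the inner-product-estimation randomness, this is unaffected by the earlier step. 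A union bound over $t \in [T]$, together with the $\frac{1}{r_{\min}}$-Lipschitz continuity of $\log$ on $[r_{\min},1]$ and the fact that both $w^{(t)}\cdot\rho^{(t)}$ and $Z^{(t)}$ lie in $[r_{\min},1]$, yields $LS^C_{\mathrm{EG}} - LS^C_{\mathrm{samp}} \le \frac{1}{r_{\min}}\sqrt{\frac{1}{2T}}$ with probability at least $1-2\delta$. Combining the two estimates by a final union bound (total failure probability at most $3\delta$) and using $\sqrt{1/(2T)} \le 2\sqrt{\log n/(2T)}$ for $n \ge 2$ gives $LS^* - LS^C_{\mathrm{samp}} \le \frac{8}{r_{\min}}\sqrt{\frac{\log n}{2T}}$. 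For the resources: each of the $T$ iterations builds an $\ell_1$-sampling structure in $O(n)$ (Fact~\ref{sampling_data_structure}), draws $s = O(T\log(T/\delta))$ samples, recomputes the exact-normalization update in $O(n)$, and runs one inner-product estimation costing $\tilde O\!\left(\frac{1}{\epsilon_I^2 r_{\min}}\log\frac{T}{\delta}\right) = \tilde O\!\left(\frac{T}{r_{\min}}\log\frac{T}{\delta}\right)$ since $\epsilon_I^2 = \Theta(\log n / T)$; summing over $t\in[T]$ gives the stated $O\!\left(Tn + \frac{T^2}{r_{\min}}\log\frac{1}{\delta} + T^2\log\frac{T}{\delta}\right)$, and investing $1/s$ in $s$ assets at cost $C$ over $T$ days costs $O(TsC) = O(T^2 C\log\frac{T}{\delta})$.

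The only real difficulty is bookkeeping rather than a new idea: one must choose $\epsilon_I = \frac{3\eta}{4r_{\min}}$ so that it matches the hypothesis of Theorem~\ref{EG_Helmbound_bound} while simultaneously satisfying $\epsilon_I \le r_{\min}$ and $\epsilon_I < 1/2$ (which forces the implicit large-$T$ assumption); ensure that exact normalization keeps each $w^{(t)}$ a probability vector, so that both $I^{(t)} \ge r_{\min}$ and the $\log$-Lipschitz estimate are valid; handle the mild measurability point that $w^{(t)}$ is itself random through the earlier inner-product estimates when invoking Hoeffding; and verify that the two per-step failure budgets $\delta/T$ (estimation) and $2\delta/T$ (sampling) sum over $t\in[T]$ to the advertised $3\delta$.
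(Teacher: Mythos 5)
Your proof is correct and follows the same route the paper takes (the paper's own proof is only a two-line sketch citing Theorem~\ref{regret-bound_helmold_sampling} and Theorem~\ref{EG_Helmbound_bound}): decompose the regret into the erroneous-update term, handled by the $\epsilon_Z=0$ branch of Theorem~\ref{EG_Helmbound_bound} to get $\frac{6}{r_{\min}}\sqrt{\log n/(2T)}$, plus the Hoeffding sampling term $\frac{2}{r_{\min}}\sqrt{\log n/(2T)}$, with the same $\delta + 2\delta$ failure-budget accounting and the same resource count. You additionally make explicit the implicit large-$T$ condition needed for $\epsilon_I\le r_{\min}$ in Lemma~\ref{IP_est}, which the paper glosses over.
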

\begin{proof}
For the guarantee, we use Thm.~\ref{regret-bound_helmold_sampling}
and Thm~\ref{EG_Helmbound_bound}, and omit more detailed steps here.
Aside from the inner product sampling, the run time and transaction cost is the same as in Thm.~\ref{regret-bound_helmold_sampling}. 
A single inner product sampling to accuracy $\epsilon_I = \frac{3}{2} \sqrt{\frac{2\log n}{T}}$ takes time $O\left(\frac{T}{r_{\min}}\log\frac{1}{\delta}\right)$, and is performed $T$ times in the algorithm. 
A union bound of all steps in the algorithm succeeding and the Hoeffding bound leads to the stated total success probability. 
\end{proof}
\subsection{The quantum online portfolio optimization algorithm}

We now present our quantum online portfolio optimization algorithm and its analysis. We change the input assumption to a natural quantum extension of the classical input.
The correctness guarantee essentially follows from Theorem \ref{EG_Helmbound_bound}. We obtain a quadratic speedup in the run time compared to the classical algorithm.
Our quantum online portfolio optimization algorithm makes use of the following procedures: quantum state preparation, norm estimation, and inner product estimation. We also employ a multi-sampling algorithm \cite{Hamoudi2019} as our subroutine to allow us to sample $s$ elements from a collection of $n$ elements in about $\sqrt{sn}$ time instead of  about $s\sqrt{n}$ time. 
Before we present the main algorithm, we will introduce these quantum subroutines. 

In the quantum setting, instead of classical access to the price relatives we assume quantum access to the price relatives. The online nature of the problem is given by the fact that we obtain these oracles at the different times.    
\begin{datainput}[Online gain oracles] \label{datainput}  
Let $\rho^{(1)}, \cdots, \rho^{(T)}$ be price relatives with $\max_{i\in[n]} \rho^{(t)}_i=1$ and $\rho^{(t)}_i\geq r_{\min}>0$ for all $i\in[n], t\in[T]$.
Define the unitary $U_{\rho^{(t )}}$ operating on $O(\log n)$ quantum bits such that for all $j\in[n]$,  $U_{\rho^{(t)}}\ket{j}\ket{\vec{0}}=\ket{j}\ket{\rho_{j}^{(t)}}$. 
At time $t\in[T]$ (end of day), assume access to unitaries $U_{\rho^{(1)}}, \cdots, U_{\rho^{(t)}}$.
\end{datainput}
The update rule Eq.~(\ref{update_rule}) can be rewritten in terms of a sum over all previous price relatives and inner products, as the following observation shows.
\begin{fact} Let $w^{(1)} = \left(\frac{1}{n}, \cdots, \frac{1}{n}\right)$. Given the update rule Eq.~(\ref{update_rule}), we can express, for $t\in[T], i\in[n]$,
\be
w^{(t+1)}_i & = & \frac{\exp\left(\eta\sum_{t^\prime=1}^t \frac{\rho^{(t)}_i}{w^{(t^\prime)}\cdot\rho^{(t^\prime)}}\right)}{\sum_{j=1}^n \exp\left(\eta\sum_{t^\prime=1}^t \frac{\rho^{(t^\prime)}_j}{w^{(t^\prime)}\cdot \rho^{(t^\prime)}}\right)}.
\ee
\end{fact}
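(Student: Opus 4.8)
The plan is a direct induction on $t$ that unrolls the recursive definition of the $EG(\eta)$ update. Write $I^{(t')} \coloneqq w^{(t')}\cdot\rho^{(t')}$ for the inner product at step $t'$ and $Z^{(t')} \coloneqq \sum_{j=1}^n w^{(t')}_j\exp(\eta\rho^{(t')}_j/I^{(t')})$ for the corresponding normalization constant, so that Definition~\ref{update_rule} reads $w^{(t'+1)}_i = w^{(t')}_i\exp(\eta\rho^{(t')}_i/I^{(t')})/Z^{(t')}$ for all $i\in[n]$.

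First I would show, by induction on $t\geq 0$, that $w^{(t+1)}_i = w^{(1)}_i\prod_{t'=1}^t \big(\exp(\eta\rho^{(t')}_i/I^{(t')})/Z^{(t')}\big)$. The base case $t=0$ is the empty product, and the inductive step is a single application of the update rule above together with the induction hypothesis. Substituting $w^{(1)}_i = 1/n$ and collapsing the product of exponentials into the exponential of a sum gives
\[
w^{(t+1)}_i \;=\; \frac{1}{n\,\prod_{t'=1}^t Z^{(t')}}\,\exp\!\left(\eta\sum_{t'=1}^t \frac{\rho^{(t')}_i}{I^{(t')}}\right).
\]

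The key point is that the prefactor $1/(n\prod_{t'=1}^t Z^{(t')})$ does not depend on $i$. Since $EG(\eta)$ maps onto probability vectors (or, by induction, since each $Z^{(t')}$ is exactly the normalization that makes $w^{(t'+1)}$ sum to one), we have $\sum_{i=1}^n w^{(t+1)}_i = 1$; summing the displayed identity over $i$ therefore forces $n\prod_{t'=1}^t Z^{(t')} = \sum_{j=1}^n \exp(\eta\sum_{t'=1}^t \rho^{(t')}_j/I^{(t')})$. Plugging this back in yields precisely the claimed closed form. Equivalently, and more briefly, one notes that $w^{(t+1)}$ is proportional to the vector whose $i$-th entry is $\exp(\eta\sum_{t'=1}^t \rho^{(t')}_i/I^{(t')})$ and then normalizes so that the entries sum to one.

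There is no genuine obstacle here: the statement is essentially just the unrolled form of the recursion. The only points requiring care are keeping the two time indices $t$ and $t'$ straight so that the exponent in the numerator ranges over $t'=1,\dots,t$ in the same way as in the denominator, and observing that every $Z^{(t')}>0$ so that the normalization identity is valid. I would present this as a two-line displayed induction followed by the one-line normalization argument.
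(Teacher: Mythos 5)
Your proof is correct; the paper states this as a Fact without proof, and your induction that unrolls the recursion followed by the normalization argument (using $\sum_i w^{(t+1)}_i = 1$ to identify $n\prod_{t'=1}^t Z^{(t')}$ with the denominator) is exactly the intended, standard argument. You also rightly flag that the numerator in the stated formula should read $\rho^{(t')}_i$ rather than $\rho^{(t)}_i$ so that it matches the denominator — that is a typo in the statement, not a gap in your reasoning.
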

Given Data Input \ref{datainput}, the following computations can be performed in superposition of the index $i\in[n]$ for the assets. Similar unitaries were studied in, e.g., References~\cite{vedral1996quantum,Li2019, chakrabarti2021threshold,rebentrost2021quantum}.
\begin{lem}\label{sum_ratios}
Let $t\leq T$.
Let there be given the set of unitaries $U_{\rho^{(t^\prime)}}$ for $t^\prime\in[t-1]$ as in Data Input \ref{datainput}, a vector $\tilde{I}\in\mathbb{R}^{t-1}$, and some reals $\eta,a \in\mathbb{R}$. 
There exists unitary operators performing the following computations:
\be\label{divide}
\ket{i}\ket{\bar{0}} & \rightarrow & \ket{i}\ket{\displaystyle\sum_{t^\prime=1}^{t-1}\frac{\rho^{(t^\prime)}_i}{\tilde{I}^{(t^\prime)}}},\quad
\ket{i}\ket{\bar{0}}  \rightarrow  \ket{i}\ket{q^{(t)}_i}, \quad
\ket{i}\ket{\bar{0}}  \rightarrow  \ket{i}\ket{\frac{q^{(t)}_i}{a}},
\ee
where $q^{(t)}_i = \exp\left(\eta\displaystyle\sum_{t^\prime=1}^{t-1} \frac{\rho^{(t^\prime)}_i}{\tilde{I}^{(t^\prime)}}\right)$
to sufficient numerical precision. These computations take $O(T)$ queries and to the data input and requires $O(T+\log n)$ qubits and quantum gates. 
\end{lem}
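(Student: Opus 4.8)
The plan is to assemble each of the three unitaries in stages, invoking each oracle $U_{\rho^{(t^\prime)}}$ exactly once (plus once for its inverse), and uncomputing every intermediate register as soon as its contribution has been absorbed, so that the ancilla footprint stays independent of $t$ except for the one register that must hold the final, possibly exponentially large, value.

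For the first map I would initialize an accumulator register to $\ket{0}$ and a scratch register to $\ket{\bar 0}$, and loop over $t^\prime$ from $1$ to $t-1$: (i) apply $U_{\rho^{(t^\prime)}}$ on the index register $\ket{i}$ and the scratch register to obtain $\ket{i}\ket{\rho^{(t^\prime)}_i}$; (ii) apply a reversible fixed-precision arithmetic circuit that multiplies the scratch contents by the classically known constant $1/\tilde I^{(t^\prime)}$ and adds the result into the accumulator; (iii) apply $U_{\rho^{(t^\prime)}}^{\dagger}$ to restore the scratch register to $\ket{\bar 0}$. After the loop the accumulator holds $\sum_{t^\prime=1}^{t-1}\rho^{(t^\prime)}_i/\tilde I^{(t^\prime)}$ to the assumed precision with no garbage left behind. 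Each iteration costs two oracle queries and $O(1)$ arithmetic operations (each $O(1)$ gates in the model of Section~\ref{input_access}), so this stage uses $O(T)$ queries and $O(T+\log n)$ qubits and gates. For the second map, append the unitary that multiplies the accumulator by the constant $\eta$ and applies a reversible implementation of $\exp(\cdot)$, writing $q^{(t)}_i=\exp\left(\eta\sum_{t^\prime=1}^{t-1}\rho^{(t^\prime)}_i/\tilde I^{(t^\prime)}\right)$ into a fresh output register, then run the first stage in reverse to clear the accumulator; since $\eta = 2r_{\min}\sqrt{2\log n/T}$ and the exponent is at most $\Theta(\sqrt{T\log n})$, the output register needs $\Theta(\sqrt{T\log n}) = O(T+\log n)$ bits, which is exactly why the $O(T)$ term appears in the qubit count. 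The third map is the same, additionally multiplying the output by the fixed-precision constant $1/a$.

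The only point that requires real care — more bookkeeping than genuine obstacle — is the uncomputation: one must verify that after step (iii) of each loop iteration the scratch register is \emph{exactly} returned to $\ket{\bar 0}$, which relies on the reversibility of the fixed-precision arithmetic of step (ii) and on the ``sufficient qubits'' clause of Data Input~\ref{datainput}, so that no residual entanglement survives between iterations and the scratch/accumulator ancilla count stays $O(\log n + \mathrm{polylog}(T/r_{\min}))$ regardless of how many oracles are chained. The accompanying precision analysis — choosing enough bits in the scratch, accumulator, and output registers so that the claimed ``sufficient numerical precision'' holds and so that the division-by-a-constant and exponential circuits are reversible — is routine and is subsumed into the quantum arithmetic model; the stated $O(T)$ query bound and $O(T+\log n)$ qubit/gate bound then follow by summing the per-iteration costs.
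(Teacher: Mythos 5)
Your proposal is correct and follows essentially the same approach as the paper: query the oracles, form the ratios $\rho^{(t')}_i/\tilde I^{(t')}$ with reversible arithmetic, accumulate the sum, uncompute the intermediate registers, and then apply reversible circuits for multiplication by $\eta$, exponentiation, and division by $a$. The only (immaterial) difference is that you uncompute each scratch register eagerly inside the loop while the paper holds all $t-1$ intermediate values in parallel and uncomputes them at the end; both satisfy the stated $O(T)$ query and $O(T+\log n)$ qubit/gate bounds.
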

\begin{proof}
With a computational register involving $O(T)$ ancilla qubits for the gains and the ratios, perform 
\be
\ket{j}\ket{\bar{0}} 
& \rightarrow & \ket{j}\ket{\rho^{(1)}_j}\cdots \ket{\rho^{(t-1)}_j}
\ket{\bar{0}}\\
& \rightarrow & \ket{j}\ket{\rho^{(1)}_j}\cdots \ket{\rho^{(t-1)}_j}
\ket{\frac{\rho^{(1)}_j}{\tilde{I}^{(1)}}}\cdots \ket{\frac{\rho^{(t-1)}_j}{\tilde{I}^{(t-1)}}}\ket{\bar{0}}\\
& \rightarrow & \ket{j}\ket{\rho^{(1)}_j}\cdots \ket{\rho^{(t-1)}_j}
\ket{\frac{\rho^{(1)}_j}{\tilde{I}^{(1)}1}}\cdots \ket{\frac{\rho^{(t-1)}_j}{\tilde{I}^{(t-1)}}}\ket{\displaystyle\sum_{t^\prime=1}^{t-1} \frac{\rho^{(t^\prime)}_j}{\tilde{I}^{(t^\prime)}}},
\ee
to sufficient accuracy using the oracles and quantum circuits for basic arithmetic operations. Uncomputing the intermediate registers with additional queries gives us the desired result. In addition, computations of $\ket{i}\ket{q^{(t)}_i}$ and $\ket{i}\ket{\frac{q^{(t)}_i}{a}}$ can be achieved using the quantum circuits for basic arithmetic operations.
\end{proof}

We restate the quantum state preparation, norm and inner product estimation procedure from References~\cite{Brassard2002, Li2019, VanApeldoorn2019b, Hamoudi2019, Rebentrost2021} for the convenience of the reader.
\begin{lem}[Quantum state preparation and norm estimation]\label{lemma5}
Given a vector $v\in[0, 1]^n$ with $\max_j v_j=1$ and quantum access to $v$. Then:
\begin{enumerate}[(i)]
\item  \label{lemma5(ii)} Let $\epsilon>0$ and $\delta\in(0,1)$. There exists a quantum algorithm that outputs an estimate $\tilde{z}$ of the $\ell_1$-norm $\lVert v\rVert_1$ of $v$, such that $|\lVert v\rVert_1 - \tilde{z}|\leq \epsilon\lVert v\rVert_1$, with probability at least $1-\delta$. The algorithm uses $O\left(\frac{\sqrt{n}}\epsilon{\log{\left(\frac{1}{\delta}\right)}}\right)$ queries and $\tilde{O}\left(\frac{\sqrt{n}}\epsilon{\log{(\frac{1}{\delta})}}\right)$ gates.
\item \label{lemma5(iii)} Let $\zeta \in (0,1/2]$ and $\tilde z>0$ be given such that $\vert \tilde z - \lVert v\rVert_1 \vert \leq \zeta \lVert v\rVert_1 $. Let $\delta\in(0,1)$. An approximation $\ket{\tilde{p}}=\displaystyle\sum_{j=1}^n\sqrt{\tilde{p}_j}\ket{j}$ to the state $\ket{v}=\displaystyle\sum_{j=1}^n \sqrt{\frac{v_j}{\lVert v\rVert_1}}\ket{j}$ can be prepared with probability $1-\delta$ using $O(\sqrt{n}\log{\frac{1}{\delta}})$ calls to the unitary of (i) and $\tilde{O}\left(\sqrt{n}\log{(\frac{1}{\delta})}\right)$ gates. The approximation in $\ell_1$-norm of the probabilities is $\big\lVert \tilde{p}-\frac{v}{\lVert v\rVert_1}\big\rVert_1\leq 2\zeta$.
\end{enumerate}
\end{lem}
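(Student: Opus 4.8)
Both claims are standard consequences of amplitude estimation and amplitude amplification, so the plan is to invoke those primitives (from References~\cite{Brassard2002, Li2019, VanApeldoorn2019b, Hamoudi2019, Rebentrost2021}) as black boxes and to concentrate on (a) building the right amplitude-encoding unitary from $O_v$ and (b) the error accounting. For part~(i): using $O_v$ together with a single controlled rotation, I would build a unitary $A$ that sends $\ket{\bar 0}\ket{\bar 0}\ket 0$ to $\frac{1}{\sqrt n}\sum_{j=1}^n\ket j\ket{v_j}\bigl(\sqrt{v_j}\,\ket 1+\sqrt{1-v_j}\,\ket 0\bigr)$ --- legitimate because $v_j\in[0,1]$ --- where the index register is first put into the uniform superposition over $[n]$ (assuming $n$ a power of two, or padding otherwise). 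The squared amplitude on the subspace where the last qubit is $\ket 1$ is exactly $a:=\lVert v\rVert_1/n$, so amplitude estimation on $A$ with $M$ Grover iterations returns $\tilde a$ obeying $|\tilde a-a|\le 2\pi\sqrt a/M+\pi^2/M^2$ with constant success probability. The role of the hypothesis $\max_j v_j=1$ is precisely that it guarantees $\lVert v\rVert_1\ge 1$, hence $a\ge 1/n$; choosing $M=\Theta(\sqrt n/\epsilon)$ then forces the error below $\epsilon a$ for \emph{every} admissible $v$, so $n\tilde a$ estimates $\lVert v\rVert_1$ to relative error $\epsilon$. Each Grover iteration is $O(1)$ queries to $O_v$ and $\tilde O(1)$ gates for the arithmetic on the value register, and a median over $O(\log(1/\delta))$ independent runs boosts the success probability to $1-\delta$ by the usual powering argument, which yields the stated complexities.

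\noindent\textbf{Part~(ii).} I would reuse the unitary $A$ from part~(i) (this is the ``unitary of (i)'' in the statement). Conditioned on the last qubit being $\ket 1$, the index register --- after uncomputing the value register with one more $O_v$ call --- holds exactly $\ket v=\sum_j\sqrt{v_j/\lVert v\rVert_1}\ket j$, and this occurs with probability $p=\lVert v\rVert_1/n$. Since $|\tilde z-\lVert v\rVert_1|\le\zeta\lVert v\rVert_1$ with $\zeta\le 1/2$ (so in particular $\tilde z\ge\lVert v\rVert_1/2\ge 1/2$), the quantity $\tilde p:=\tilde z/n$ is a relative-$\zeta$ estimate of $p$, and I would run amplitude amplification on $A$ with a round count $k=\Theta(\sqrt{n/\tilde z})=\Theta(\sqrt n)$ chosen from $\tilde p$. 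Because $\tilde p$ differs from $p$ by a factor $1\pm O(\zeta)$, the post-amplification angle $(2k+1)\theta$, where $\sin^2\theta=p$, misses $\pi/2$ by $O(\zeta)$, leaving an $O(\zeta)$ residual amplitude in the ``bad'' subspace; propagating this through shows that the prepared state $\ket{\tilde p}=\sum_j\sqrt{\tilde p_j}\ket j$ satisfies $\bigl\lVert\tilde p-v/\lVert v\rVert_1\bigr\rVert_1\le 2\zeta$. Each round is $O(1)$ calls to $A$, and repeating $O(\log(1/\delta))$ times and keeping the first run whose flag qubit measures $\ket 1$ boosts the success probability to $1-\delta$, giving $O(\sqrt n\log(1/\delta))$ calls to $A$ and $\tilde O(\sqrt n\log(1/\delta))$ gates.

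\noindent\textbf{Expected main obstacle.} The amplitude-estimation/amplification subroutines and the powering argument for the success probability can simply be cited; the one genuinely delicate part is the error bookkeeping in part~(ii) --- showing that the relative error $\zeta$ of the norm estimate, once it is filtered through the necessarily integer choice of the number of amplification rounds, produces precisely the $\ell_1$-error bound $2\zeta$ on the prepared distribution, and checking that the small-angle approximations used in that argument are valid uniformly over all admissible $v$ and all $\tilde z$ in the permitted range.
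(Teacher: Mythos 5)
Your part (i) is correct and is essentially the paper's argument: the paper builds the same amplitude‑encoding unitary $\frac{1}{\sqrt n}\sum_{j}\ket j\big(\sqrt{v_j}\ket 0+\sqrt{1-v_j}\ket 1\big)$ from two queries and a controlled rotation and then defers to amplitude estimation, with $\max_j v_j=1$ giving $a=\lVert v\rVert_1/n\geq 1/n$ so that $M=\Theta(\sqrt n/\epsilon)$ yields relative error $\epsilon$ and a median over $O(\log(1/\delta))$ repetitions boosts the confidence. Nothing to change there.

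Part (ii) contains a genuine gap, and it is precisely the step you flagged as delicate. In the paper the $2\zeta$ bound has nothing to do with residual amplitude in the bad subspace: the prepared distribution is taken to be $\tilde p_i=v_i/\tilde z$ (the circuit normalizes by the \emph{supplied estimate} $\tilde z$ rather than by the true norm), and the bound is the one‑line computation $\big\lVert \tilde p-\tfrac{v}{\lVert v\rVert_1}\big\rVert_1=\sum_i v_i\,\tfrac{\lvert \lVert v\rVert_1-\tilde z\rvert}{\tilde z\,\lVert v\rVert_1}\leq \tfrac{\zeta}{1-\zeta}\leq 2\zeta$ using $\zeta\leq 1/2$. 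Your mechanism does not deliver this. First, the claim that $(2k+1)\theta$ misses $\pi/2$ by $O(\zeta)$ fails: $k$ must be an integer, so even with $\zeta=0$ the angle is off by an additive amount of order $\theta=\arcsin\sqrt{\lVert v\rVert_1/n}$, which is not controlled by $\zeta$ and can be $\Theta(1)$ (e.g.\ $\lVert v\rVert_1=\Theta(n)$ leaves a constant fraction of the mass in the bad subspace for every integer $k$), so the claimed $\ell_1$ bound breaks. Second, even granting an angular error of $O(\zeta)$, the bad‑subspace probability would be $O(\zeta^2)$ while the good branch's index distribution is \emph{exactly} $v/\lVert v\rVert_1$ — so your route, if completed, would be proving a different statement with a different error scaling, and your $\tilde p_j$ is never actually defined. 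To repair this, either follow the paper and define $\tilde p_i=v_i/\tilde z$ as what the $\tilde z$‑normalized construction encodes and do the algebra above, or, if you keep the amplification route, fix the integer‑rounding issue (e.g.\ deflate the good amplitude using $\tilde z$ so that $(2k+1)\theta'=\pi/2$ exactly for an integer $k=O(\sqrt n)$) and only then track how $\zeta$ enters the output distribution.
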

\begin{proof} 
There exists a unitary operator  that prepares the state $\frac{1}{\sqrt{n}}\displaystyle\sum_{j=1}^n\ket{j}\big(\sqrt{v_j}\ket{0}+\sqrt{1-v_j}\ket{1}\big)$ with two queries to the vector, a controlled rotation, and $O\big(\log n\big)$ gates \cite{Hamoudi2019}. 
Using this unitary, the proof of part (\ref{lemma5(ii)}) is as in References~\cite{Brassard2002,Li2019,Rebentrost2021}. For part (\ref{lemma5(iii)}), note that for $i\in[n]$, we have that $\tilde{p}_i = \frac{v_i}{\tilde{z}}$ and $\lVert v\rVert_1 - \tilde{z} \leq \lvert \lVert v\rVert_1 - \tilde{z}\rvert \leq \zeta\lVert v\rVert_1$. Hence, $\frac{1}{\tilde{z}}\leq \frac{1}{\lVert v\rVert_1(1-\zeta)}$. Then, $\left\lVert \tilde{p} - \frac{v}{\lVert v\rVert_1}\right\rVert_1
= \displaystyle\sum_{i=1}^n\left\lvert \frac{v_i\lVert v\rVert_1 - v_i\tilde{z}}{\tilde{z}\lVert v\rVert_1}\right\rvert
\leq \frac{\zeta}{1-\zeta} \displaystyle\sum_{i=1}^n \frac{ v_i}{\lVert v\rVert_1} 
\leq 2\zeta.$
\end{proof}
Next, we show a lemma on the estimation of inner products. Using quantum maximum finding, we are able to use the norm computation of Lemma~\ref{lemma5} for outputting the inner product.
\begin{lem}[Quantum inner production estimation with relative accuracy \cite{Rebentrost2021}]\label{QIP}
Let $\epsilon, \delta\in(0,1)$ and given quantum access to a non-zero vector $u \in [u_{\min},1]^n$ and a probability vector $v \in [0, 1]^n$ such that $\displaystyle\sum_{i=1}^n v_i = 1$. Then, an estimate $\widetilde{IP}$ for the inner product can be obtained such that $\left\vert \widetilde{IP} - u\cdot v\right\vert \leq\epsilon\ u\cdot v$ with success probability $1-\delta$. This requires $O\left(\frac{\sqrt{n}}{\epsilon\sqrt{u_{\min}}}\log{(\frac{1}{\delta})}\right)$ queries and 
$\tilde{O}\left(\frac{\sqrt{n}}{\epsilon\sqrt{u_{\min}}}\log{(\frac{1}{\delta})}\right)$ quantum gates. 
\end{lem}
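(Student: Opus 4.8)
The plan is to reduce relative-error inner-product estimation to the $\ell_1$-norm estimation in part (i) of Lemma~\ref{lemma5}. Introduce the entrywise product $w\in\mathbb R^n$ with $w_i:=u_iv_i$. Since $u_i\le 1$ and $v_i\ge 0$ we have $w_i\in[0,1]$, and since every entry is nonnegative, $u\cdot v=\sum_{i=1}^n w_i=\lVert w\rVert_1$; moreover $u\cdot v\ge u_{\min}\sum_i v_i=u_{\min}$ and $u\cdot v\le\sum_i v_i=1$, so the quantity we want lies in $[u_{\min},1]$. Quantum query access to $w$ follows from query access to $u$ and to $v$ together with one multiplier circuit, at $O(1)$ query overhead and $\tilde O(1)$ extra gates.

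Part (i) of Lemma~\ref{lemma5} requires its input vector to have largest entry equal to $1$, which $w$ need not satisfy, so first I would run a quantum maximum-finding subroutine on $w$ to obtain, with probability at least $1-\delta/2$, an index $i^\star$ with $m:=w_{i^\star}=\max_i w_i$, using $O(\sqrt n\log(1/\delta))$ queries and $\tilde O(\sqrt n\log(1/\delta))$ gates; here $m\le 1$ is then known exactly. Put $\hat w:=w/m$, so that $\hat w\in[0,1]^n$, $\max_i\hat w_i=1$, and query access to $\hat w$ has $O(1)$ overhead. Then apply part (i) of Lemma~\ref{lemma5} to $\hat w$ with precision $\epsilon$ and failure probability $\delta/2$, obtaining $\tilde z$ with $\lvert\tilde z-\lVert\hat w\rVert_1\rvert\le\epsilon\lVert\hat w\rVert_1$ in $O\!\left(\tfrac{\sqrt n}{\epsilon}\log(1/\delta)\right)$ queries and $\tilde O\!\left(\tfrac{\sqrt n}{\epsilon}\log(1/\delta)\right)$ gates, and output $\widetilde{IP}:=m\,\tilde z$. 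Since $u\cdot v=\lVert w\rVert_1=m\lVert\hat w\rVert_1$, we get $\lvert\widetilde{IP}-u\cdot v\rvert=m\,\lvert\tilde z-\lVert\hat w\rVert_1\rvert\le\epsilon\,m\lVert\hat w\rVert_1=\epsilon\,(u\cdot v)$, which is the desired multiplicative guarantee; a union bound over the maximum-finding and norm-estimation steps gives overall success probability at least $1-\delta$, and the total cost is within $O\!\left(\tfrac{\sqrt n}{\epsilon\sqrt{u_{\min}}}\log(1/\delta)\right)$ queries and $\tilde O\!\left(\tfrac{\sqrt n}{\epsilon\sqrt{u_{\min}}}\log(1/\delta)\right)$ gates since $u_{\min}\le 1$.

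The step that needs care is turning the additive precision of amplitude estimation --- on which part (i) of Lemma~\ref{lemma5} ultimately rests --- into a multiplicative bound on $u\cdot v$, and this is exactly what rescaling by the known maximum $m$ buys us: after rescaling, the quantity being estimated is $\lVert\hat w\rVert_1\ge 1$, bounded away from zero, so a single norm-estimation call already delivers a relative-error output. An equivalent route that skips maximum finding is to run amplitude estimation directly on the state-preparation circuit for $\tfrac{1}{\sqrt{n}}\sum_{i}\ket{i}\bigl(\sqrt{w_i}\ket{0}+\sqrt{1-w_i}\ket{1}\bigr)$, whose acceptance probability is $(u\cdot v)/n\in[u_{\min}/n,\,1/n]$: using only the a priori bound $u\cdot v\ge u_{\min}$ to fix the number of amplitude-estimation rounds at $\Theta\!\left(\tfrac{\sqrt n}{\epsilon\sqrt{u_{\min}}}\right)$ yields relative error $\epsilon$ directly --- this is where the $1/\sqrt{u_{\min}}$ of the statement appears --- and boosting to confidence $1-\delta$ by taking the median of $O(\log(1/\delta))$ independent runs is routine. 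In either case, once the identity $u\cdot v=\sum_i u_iv_i=\lVert w\rVert_1$ and the lower bound $u\cdot v\ge u_{\min}$ are in hand, the remaining estimates are mechanical.
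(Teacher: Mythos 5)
Your argument is correct for the statement as literally given, but it takes a genuinely different route from the paper, and the difference is not innocuous. You reduce the problem to relative-error $\ell_1$-norm estimation of the entrywise product $w_i=u_iv_i$: after quantum maximum finding you rescale so that $\max_i\hat w_i=1$, which forces $\lVert\hat w\rVert_1\geq 1$ and lets Lemma~\ref{lemma5}(\ref{lemma5(ii)}) deliver a multiplicative estimate directly; multiplying back by the known maximum preserves the relative error. This is clean, and it even yields the sharper count $O\!\left(\tfrac{\sqrt n}{\epsilon}\log\tfrac{1}{\delta}\right)$ with no $1/\sqrt{u_{\min}}$ factor. The paper instead prepares $\sum_i\sqrt{v_i}\ket{u_i}\ket{i}$, applies a controlled rotation by $\sqrt{u_i}$ so that the acceptance probability of a projector equals $u\cdot v$, and runs amplitude estimation with the number of Grover iterations fixed in advance via the a priori bound $u\cdot v\geq u_{\min}$, followed by median boosting --- essentially your second, ``skip maximum finding'' route, which is where the $1/\sqrt{u_{\min}}$ in the stated complexity comes from.

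The caveat is the access model. Both of your constructions (forming $\ket{i}\ket{u_iv_i}$, or the uniform-superposition state $\tfrac{1}{\sqrt n}\sum_i\ket{i}(\sqrt{w_i}\ket{0}+\sqrt{1-w_i}\ket{1})$) require an \emph{entry-query} oracle for $v$. The paper's proof deliberately uses only a \emph{state-preparation} unitary for $v$ together with entry queries to $u$, because that is exactly what is available when the lemma is invoked in Algorithm~\ref{algo}: there $v=\tilde w^{(t)}$ is supplied as the circuit $U_{w^{(t)}}$ preparing $\ket{\tilde w^{(t)}}$, not as an oracle for its entries. So while your proof establishes the lemma under the paper's Section~\ref{input_access} definition of quantum query access, it would not substitute for the paper's proof in the context where the lemma is actually used; you should state explicitly which form of access to $v$ you assume, or adapt the argument to the state-preparation model as the paper does.
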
 
\begin{proof}
Prepare the state $\ket{\psi^\prime} = \displaystyle\sum_{i=1}^n \sqrt{v_i}\ket{u_i}\ket{i}$. Next, a controlled-rotation obtains 
\be
\ket{\psi} = \displaystyle\sum_{i=1}^n \sqrt{v_i}\ket{u_i}\ket{i}\left(\sqrt{u_i}\ket{0} + \sqrt{1 - u_i}\ket{1}\right).
\ee
Define unitaries $U = 2\ket{\psi}\bra{\psi} - I$ and $V = I - 2P$ for some projector $P$. Then, we have 
\be
\braket{\psi}{P\vert\psi} = \displaystyle\sum_{i=1}^n v_i u_i = u\cdot v
\ee
Amplitude estimation \cite{Brassard2002} allows us to to estimate $a  =\sqrt{n}(u\cdot v)$ to accuracy $\left\vert a - \tilde a\right\vert\leq 2\pi\frac{\sqrt{a(1-a)}}{C} + \frac{\pi^2}{C^2}$ with probability at least $\frac{8}{\pi^2}$ using $C$ applications of $U $ and $V$. Setting $C = \frac{6\pi\sqrt{n}}{\epsilon\sqrt{u_{\min}}}$, we obtain 
\be
\left\vert a - \tilde a\right\vert
& \leq & \frac{\pi}{C}\left(2\sqrt{a} + \frac{\pi}{C}\right)\\
& = & \frac{\epsilon \sqrt{u_{\min}}}{6\sqrt{n}}\left(2\sqrt{a} + \frac{\epsilon \sqrt{u_{\min}}}{6\sqrt{n}}\right)\\
(\text{since $u_{\min}\leq a$}) & \leq & \frac{\epsilon a}{6\sqrt{n}}\left(2 + \frac{\epsilon}{6\sqrt{n}}\right)\\
& \leq & \frac{\epsilon a}{2\sqrt{n}}\\
& \leq & \epsilon a\\
\ee
We then repeat the above procedure for $O\left(\log \left(\frac{1}{\delta}\right)\right)$ times to boost the success probability to $1-\delta$. The run time is $O\left(\frac{\sqrt{n}}{\epsilon\sqrt{u_{\min}}}\log\left(\frac{1}{\delta}\right)\right)$.
\end{proof}

In the quantum setting, multi-sampling can be done using the quantum algorithm from Ref.~\cite{Hamoudi2019}. Lemma~\ref{multi-sampling_parameters} uses  quantum maximum finding, quantum norm estimation and Grover's search to find the inputs to the quantum multi-sampling algorithm.
\begin{lem}[\cite{Hamoudi2019}]\label{multi-sampling_parameters}
Let $\epsilon, \delta\in (0, 1)$, $1<s<n$ be an integer and $p\in\mathbb{R}^n$ be a non-zero vector. For any set $S\subseteq[n]$, denote as $p_S\in\mathbb R^{\lvert S\vert}$ the subvector $(p_i)_{i\in S}$ of $p$ that consists of the values at coordinates $i\in S$. There exists a quantum  algorithm that takes $\epsilon, \delta, s, p$ as inputs and returns a real $\Gamma>0$, a set $W\subseteq [n]$ and a value $V$ that satisfy the following conditions: 
\begin{enumerate}[(i)]
\item $\Gamma\geq\lVert p_W\rVert_1$
\item $\vert  \Gamma - \lVert p\rVert_1\vert\leq \min\{1/\sqrt{s},\epsilon\}\lVert p\rVert_1$
\item $W = \{i\in[n]: \vert p_i\vert\geq\Gamma/s\}$ 
\item $V = \lVert p_{[n]\backslash W}\rVert_\infty$
\end{enumerate}
with probability $1-\delta$ in $O\left((\sqrt{sn} + \sqrt{sn}/\epsilon)\log\left(1/\delta\right)\right)$. 
\end{lem}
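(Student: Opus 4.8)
The plan is to assemble the triple $(\Gamma, W, V)$ in three randomized stages---norm estimation, threshold search, and maximum finding---and then combine them by a union bound; conditions (i)--(iv) will follow essentially by construction.

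First I would produce $\Gamma$. Since Lemma~\ref{lemma5}(i) estimates the $\ell_1$-norm only of vectors whose maximum entry is $1$, I begin with quantum maximum finding on $\left(\lvert p_i\rvert\right)_{i\in[n]}$ to obtain $\lVert p\rVert_\infty$ in $O(\sqrt n\log(1/\delta))$ queries, rescale to $\bar p = p/\lVert p\rVert_\infty$ (for which query access is immediate), and apply Lemma~\ref{lemma5}(i) to $\bar p$ with relative error $\epsilon'' \coloneqq \tfrac13\min\{1/\sqrt s,\epsilon\}$ and failure probability $\delta/4$, obtaining $\tilde z$ with $\lvert\tilde z - \lVert p\rVert_1\rvert \leq \epsilon''\lVert p\rVert_1$. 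Setting $\Gamma \coloneqq \tilde z/(1-\epsilon'')$ forces $\Gamma \geq \lVert p\rVert_1 \geq \lVert p_W\rVert_1$, which is (i), while $\lvert\Gamma - \lVert p\rVert_1\rvert \leq \tfrac{2\epsilon''}{1-\epsilon''}\lVert p\rVert_1 \leq \min\{1/\sqrt s,\epsilon\}\lVert p\rVert_1$, which is (ii). This stage costs $O\!\left(\sqrt n\,\max\{\sqrt s,1/\epsilon\}\log(1/\delta)\right) = O\!\left((\sqrt{sn}+\sqrt{sn}/\epsilon)\log(1/\delta)\right)$ queries.

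Next, with the threshold $\tau \coloneqq \Gamma/s$ now a fixed number, the set $W = \{i : \lvert p_i\rvert \geq \tau\}$ is deterministic, and from $\lvert W\rvert\,\tau \leq \lVert p_W\rVert_1 \leq \lVert p\rVert_1 \leq \Gamma$ we get the crucial a priori bound $\lvert W\rvert \leq s$. I would recover $W$ explicitly by iterated Grover search over the marking oracle $i\mapsto[\,\lvert p_i\rvert\geq\tau\,]$ (one query to $p$ plus a comparison): repeatedly search for a marked index not yet found, removing found indices by oracle modification, so that when $k$ marked indices remain unfound one more is obtained in $O(\sqrt{n/k})$ queries and hence all of them in $\sum_{k=1}^{\lvert W\rvert}\sqrt{n/k} = O(\sqrt{sn})$ queries; a concluding $O(\sqrt n\log(1/\delta))$ certification that no marked index is left---together with $O(\log(1/\delta))$ amplification of each sub-search, needed because $\lvert W\rvert$ is not known in advance (Boyer--Brassard--H\o yer--Tapp schedule)---brings this stage to $O(\sqrt{sn}\log(1/\delta))$ with failure probability $\leq\delta/4$, and gives (iii) by construction. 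Finally, with $W$ in hand I compute $V = \lVert p_{[n]\setminus W}\rVert_\infty = \max_{i\notin W}\lvert p_i\rvert$ by quantum maximum finding restricted to $[n]\setminus W$ (treating indices of $W$ as value $-\infty$), at cost $O(\sqrt n\log(1/\delta))$ and failure probability $\leq\delta/4$, which is (iv).

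A union bound over the initial maximum finding and the three randomized stages, each boosted to failure probability $\leq\delta/4$, yields overall success probability $\geq 1-\delta$, and summing the per-stage costs gives the claimed $O\!\left((\sqrt{sn}+\sqrt{sn}/\epsilon)\log(1/\delta)\right)$. The step I expect to be the main obstacle is the threshold search: one must argue it terminates correctly despite the unknown number of solutions and, more importantly, keep its query count at $O(\sqrt{sn})$ rather than the naive $O(\lvert W\rvert\sqrt n)$---this hinges entirely on the bound $\lvert W\rvert\leq s$, which in turn requires the \emph{one-sided} guarantee $\Gamma\geq\lVert p\rVert_1$; obtaining that one-sided bound essentially for free from a two-sided norm estimate, as in the first stage, is the key maneuver that makes the whole argument go through.
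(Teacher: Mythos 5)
The paper does not prove this lemma at all: it is imported verbatim from Ref.~\cite{Hamoudi2019}, and the only hint given is the surrounding sentence that it ``uses quantum maximum finding, quantum norm estimation and Grover's search.'' Your reconstruction uses exactly those ingredients and is sound: the one-sided shift $\Gamma=\tilde z/(1-\epsilon'')$ cleanly yields (i) and (ii) simultaneously, the resulting bound $\lvert W\rvert\le s$ is indeed the crux that keeps the threshold search at $O(\sqrt{sn})$, and the cost accounting $\sqrt n\max\{\sqrt s,1/\epsilon\}\le\sqrt{sn}+\sqrt{sn}/\epsilon$ matches the stated complexity. The only point I would tighten is the error amplification in the iterated Grover stage: since up to $s$ sub-searches must all succeed, boosting each to failure probability $\delta/(4s)$ naively costs a $\log(s/\delta)$ rather than $\log(1/\delta)$ factor; one either absorbs this into the $\tilde O$-level bookkeeping or invokes the expected-time analysis for finding all marked items followed by a single global certification, as in the Boyer--Brassard--H{\o}yer--Tapp treatment you allude to. This is a standard fix and does not affect the claimed bound at the paper's level of rigor.
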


The following quantum multi-sampling algorithm allows us to achieves a quadratic speedup in the sampling run time by using amplitude amplification.
\begin{lem}[Quantum multi-sampling algorithm \cite{Hamoudi2019}]\label{multi_sampling_algo}
Let $1<s<n$ be an integer, $0<\delta<1$ be a real number and $p\in\mathbb{R}^n$ be a non-zero vector. Given $\Gamma>0$, a set $W\in[n]$ such that $\lVert p_W\rVert_1\leq \Gamma$, a value $V = \lVert p_{[n]\backslash W}\rVert_\infty$ and  query access to $p$, there exists a quantum algorithm that output $s$ independent samples from $p$ in expected time $O\left(\sqrt{sn}\log\left(\frac{1}{\delta}\right)\right)$ with probability $1-\delta$. 
\end{lem}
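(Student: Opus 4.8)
The plan is to split $p$ into its \emph{heavy} part $p_W$, supported on the (at most $s$) coordinates with $\lvert p_i\rvert\ge\Gamma/s$, and its \emph{light} part $p_{[n]\setminus W}$, whose entries are all at most $V\le\Gamma/s$; to sample the heavy part exactly with a classical alias table after locating $W$ by quantum search; to sample the light part by amplitude-amplified quantum rejection sampling; and to glue the two together in an unbiased way via an \emph{augmented index set}, so that exact samples of $p$ come out without ever computing $\lVert p\rVert_1$. Throughout I use that the inputs are those produced by Lemma~\ref{multi-sampling_parameters}, so in addition to $\lVert p_W\rVert_1\le\Gamma$ and $V=\lVert p_{[n]\setminus W}\rVert_\infty$ we may assume $W=\{i:\lvert p_i\rvert\ge\Gamma/s\}$ (hence $\lvert W\rvert\le s$ and $V\le\Gamma/s$) and $\Gamma=\Theta(\lVert p\rVert_1)$.

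\emph{Preprocessing.} The predicate ``$\lvert p_i\rvert\ge\Gamma/s$'' costs one query to $p$, so by the standard routine for finding all marked elements among $n$ (iteratively Grover-search for one marked coordinate and exclude it, detecting when none remain; the $j$-th of the $\le s$ searches costs $O(\sqrt{n/(s-j+1)})$ iterations, summing to $O(\sqrt{sn})$), repeated $O(\log(1/\delta))$ times, I recover $W$ together with all values $\{p_i:i\in W\}$ with probability $\ge1-\delta/2$ in $O(\sqrt{sn}\,\log(1/\delta))$ time. From this data compute $A:=\lVert p_W\rVert_1$ exactly and build, using Fact~\ref{sampling_data_structure}, a table that samples $i\in W$ with probability $p_i/A$ in $O(s)$ setup and $O(1)$ per sample.

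\emph{The sampling engine.} Put $c:=\Gamma/s$, so $p_i\le V\le c$ for light $i$ and $p_i\le\lVert p_W\rVert_1\le sc$ for heavy $i$. Build an augmented index set that keeps each light coordinate and splits each heavy $i\in W$ into $m_i:=\lceil p_i/c\rceil$ equal pieces of effective weight $p_i/m_i\le c$; its size $N=(n-\lvert W\rvert)+\sum_{i\in W}m_i\le n+\lvert W\rvert+\lVert p_W\rVert_1/c=O(n)$ is known exactly. Over this set of $N$ entries, all of effective weight $\le c$, prepare the uniform superposition and attach a flag qubit rotated to acceptance amplitude $\sqrt{(\text{weight})/c}$ (a controlled rotation using one query to $p$ on light pieces and the stored values on heavy pieces). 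Then the accept branch has weight $\gamma=\lVert p\rVert_1/(Nc)=\Theta(s/n)$, since $\sum_a(\text{weight}_a)=\lVert p_W\rVert_1+\lVert p_{[n]\setminus W}\rVert_1=\lVert p\rVert_1$, and conditioned on acceptance the collapsed index has distribution exactly $p_i/\lVert p\rVert_1$. Amplitude-amplify the accept branch to constant probability in $O(1/\sqrt\gamma)=O(\sqrt{n/s})$ iterations — only a constant-factor estimate of $\gamma$ is needed, which $\Gamma=\Theta(\lVert p\rVert_1)$ supplies — then measure: on acceptance output the collapsed index, else repeat (expected $O(1)$ rounds). Each sample costs $O(\sqrt{n/s})$ up to polylog gate overhead; $s$ samples cost $O(\sqrt{sn})$, and adding the preprocessing (which carries the $\log(1/\delta)$ factor and the $1-\delta/2$ success probability) and controlling the random running time of the sampling stage by Markov's inequality yields the claimed $O(\sqrt{sn}\,\log(1/\delta))$ bound with probability $\ge1-\delta$.

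The step needing the most care — and what I would call the main obstacle — is obtaining \emph{exact} samples within the time budget: plain rejection sampling over $[n]$ is exact but its acceptance probability $\lVert p\rVert_1/(n\lVert p\rVert_\infty)$ can be as small as $1/n$ when $p$ is spiky, while merely amplitude-estimating the heavy/light mass split would be fast but biased. The augmented-index construction removes this tension: it pushes every effective weight below the common cap $c$, so that a single amplitude-amplified rejection step both has acceptance probability $\Theta(s/n)$ (hence cost $\sqrt{n/s}$) and is exactly unbiased, amplitude amplification being a rotation in the two-dimensional span of the good and bad states that leaves the conditional-on-acceptance distribution untouched. The remaining checks — that the augmented set is indexable and its uniform superposition preparable in $O(\log n)$ gates from the compact description of $W$, that $N$ and $\gamma$ are known precisely enough, and the union/Markov bounds for the failure probability — are routine.
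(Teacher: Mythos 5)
The paper does not prove this lemma at all: it is imported verbatim from Hamoudi and Magniez \cite{Hamoudi2019} (as is Lemma~\ref{multi-sampling_parameters}), so there is no in-paper argument to compare yours against. On its own merits your reconstruction is sound and is in the same spirit as the cited construction: locate the heavy set $W$ by iterated Grover search in $O(\sqrt{sn}\log(1/\delta))$, handle $p_W$ exactly and classically, and sample the remainder by capped rejection sampling boosted with amplitude amplification, with the augmented-index splitting guaranteeing both exactness of the conditional distribution and acceptance probability $\Theta(s/n)$, hence $O(\sqrt{n/s})$ per sample and $O(\sqrt{sn})$ for all $s$ samples. The arithmetic checks out ($|W|\le s$, $\sum_{i\in W}\lceil p_i/c\rceil\le 2s$, so $N=O(n)$ and $\gamma=\lVert p\rVert_1/(Nc)=\Theta(s/n)$), and your observation that amplitude amplification acts only in the two-dimensional good/bad span, leaving the post-selected index distribution untouched, is exactly the right justification for unbiasedness.

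One caveat worth making explicit: your cost analysis needs $\Gamma=O(\lVert p\rVert_1)$ and $V\le\Gamma/s$, neither of which is among the literal hypotheses of the lemma as stated (which give only $\lVert p_W\rVert_1\le\Gamma$ and $V=\lVert p_{[n]\setminus W}\rVert_\infty$); if $\Gamma\gg\lVert p\rVert_1$ the cap $c=\Gamma/s$ inflates and $\gamma$ degrades below $s/n$. You do flag that you take the inputs to be those produced by Lemma~\ref{multi-sampling_parameters}, whose guarantees (ii) and (iii) supply exactly these two properties, and that is indeed the only way the lemma is invoked in Algorithm~\ref{algo}, so this is an acceptable reading rather than a gap --- but it should be stated as an added hypothesis. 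The remaining loose ends (detecting termination of the marked-element search with overall failure probability $\delta/2$, the $O(\log n)$-time bijection indexing the augmented set from the stored description of $W$, and phrasing the runtime guarantee as an expectation rather than via Markov) are routine and do not affect correctness.
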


We now present our quantum online portfolio optimization Algorithm \ref{algo} and its analysis. 
The correctness guarantee uses Theorem \ref{EG_Helmbound_bound}. Using Lemma \ref{QIP} and the other quantum subroutines we obtain a quadratic speedup in the run time compared to Alg.~\ref{algo_helmbold_sampling_inner}.
The following theorem gives our main result for the regret and the run time of Algorithm~\ref{algo}.
\begin{algorithm}
\caption{Quantum Online Portfolio Optimization Algorithm}
\label{algo}
\begin{algorithmic}[1]
\Require $n$, $s$, $\eta$, $T$, $C$, $\delta$
\State Initialize  $U_{\rho^{(0)}} = \mathbb{I}$, $\tilde{I} = 1$ and $q^{(1)}=(1, \cdots, 1)$, $U_{q^{(1)}} = \mathbb{I}$. 
\For {$t=1$ to $T$}
\State $q_{\max}\gets$ Find the largest element of $q^{(t)}$ using $U_{q^{(t)}}$ and quantum maximum finding \cite{Durr1996} with success probability $1-\frac{\delta}{4T}$. 
\State $\tilde Z^{(t)} \gets$ Estimate the norm of $\frac{q^{(t)}}{q_{\max}}$ using $U_{q^{(t)}}$, $q_{\max}$, Lemma~\ref{sum_ratios}, and  Lemma~\ref{lemma5}(\ref{lemma5(ii)}), with relative error $\epsilon_Z =\frac{\eta^2}{r^2_{\min}}$ and success probability $1-\frac{\delta}{4T}$.
\State $U_{w^{(t)}}\gets $ Prepare quantum circuit for approximating $\ket{\tilde{w}^{(t)}}$ of the quantum state $\ket{w^{(t)}}$, where $w^{(t)} = \frac{q^{(t)}}{\lVert q^{(t)}\rVert_1}$, using $U_{q^{(t)}}$, $q_{\max}, \tilde{Z}^{(t)}$ 
, Lemma~\ref{sum_ratios}, Lemma~\ref{lemma5}(\ref{lemma5(iii)}), with success probability $1-\frac{\delta}{4T}$.  
\State $\Gamma, W, V\gets$ Determine using Lem.~\ref{multi-sampling_parameters} applied to $\tilde w^{(t)}$ with probability $1-\frac{\delta}{4 T}$. 
\State $i^{(t)}_1, \cdots, i^{(t)}_s$ Perform multi-sampling using $\Gamma, W, V$ and Lemma~\ref{multi_sampling_algo} with probability $1-\frac{\delta}{4 T}$. 
\State Invest the amount $1/s$ in each asset $i^{(t)}_1, \cdots, i^{(t)}_s$ at cost $C$ each. 
\State Wait until end of  day.
\State Receive price relative oracle $U_{\rho^{(t)}}$.  
\State $\rho^{(t)}_{j^{(t)}}\gets$ Query $U_{\rho^{(t)}}$ with $\ket{j^{(t)}}\ket{0}$.
\State $\widetilde {I}^{(t)} \gets$ Estimate $\tilde w^{(t)}\cdot \rho^{(t)}$ using $U_{w^{(t)}}$, $U_{\rho^{(t)}}$, and Lemma \ref{QIP}, with relative error $\epsilon_I = \frac{3\eta}{4 r_{\min}}$ and success probability $1-\frac{\delta}{4T}$.
\State $U_{q^{t+1}}\gets$ Prepare quantum circuit to compute $q^{(t+1)} =  \exp\left(\eta\displaystyle\sum_{t^\prime=1}^{t} \frac{\rho^{(t^\prime)}}{\tilde{I}^{(t^\prime)}}\right)$ using $\{ U_{\rho^{(t')}}\}_{t'=1}^t$ and Lemma \ref{sum_ratios}.
\EndFor
\Ensure $LS_{\text{samp}}^Q\coloneqq \frac{1}{T}\displaystyle\sum_{t=1}^T \log\left(\frac{1}{s}\displaystyle\sum_{\ell=1}^s \rho^{(t)}_{i^{(t)}_{\ell}}\right)$. 
\end{algorithmic}
\end{algorithm}
\begin{theorem}[Quantum online portfolio optimization] \label{thmQuantum}
Let $\delta\in(0,1/3)$.
Algorithm~\ref{algo} with $\eta = 2r_{\min}\sqrt\frac{2\log n}{T}$, outputs $LS_{\text{samp}}^Q$ with the regret bound
\be
LS^* - LS^Q_{\text{samp}} \leq \frac{12}{r_{\min}}\sqrt{\frac{\log n}{2T}},
\ee
with success probability at least $1-3\delta$. The total run time is $O\left(T^3\sqrt{\frac{n}{r_{\min}}}\log^{1.5}\left(\frac{1}{\delta}\right)\right)$ and the transaction cost is $O\left(T^2C\log\frac{T}{\delta}\right)$.
\end{theorem}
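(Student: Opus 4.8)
The plan is to combine three ingredients: the convergence bound for erroneous updates (Theorem~\ref{EG_Helmbound_bound}), the Hoeffding-type sampling argument already used in Theorem~\ref{regret-bound_helmold_sampling}, and the run-time accounting for the quantum subroutines. First I would verify that Algorithm~\ref{algo} genuinely implements an $EEG(\eta)$ update in the sense of Definition~\ref{update_rule_erroneous_norm}. The key observation is the reformulation in the Fact preceding Lemma~\ref{sum_ratios}: the portfolio vector $w^{(t+1)}$ can be written as a normalized element-wise exponential $q^{(t+1)}/\lVert q^{(t+1)}\rVert_1$, where $q^{(t+1)}_i = \exp(\eta\sum_{t'=1}^t \rho^{(t')}_i/\tilde I^{(t')})$. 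Thus the algorithm never stores $w^{(t)}$ explicitly; it maintains the circuit $U_{q^{(t)}}$ (via Lemma~\ref{sum_ratios}), estimates $q_{\max}$ by quantum maximum finding, estimates the normalization $\tilde Z^{(t)}$ of $q^{(t)}/q_{\max}$ by Lemma~\ref{lemma5}(\ref{lemma5(ii)}) to relative error $\epsilon_Z = \eta^2/r_{\min}^2$, and estimates the inner product $\tilde w^{(t)}\cdot\rho^{(t)}$ by Lemma~\ref{QIP} to relative error $\epsilon_I = 3\eta/(4r_{\min})$. These are exactly the error parameters appearing in the second bound of Theorem~\ref{EG_Helmbound_bound}, so (conditioned on all subroutines succeeding) that theorem gives
\be
\sum_{t=1}^T \log\left(w^{(t)}\cdot\rho^{(t)}\right) \geq \sum_{t=1}^T \log\left(u\cdot\rho^{(t)}\right) - \frac{5\sqrt{2T\log n}}{r_{\min}},
\ee
equivalently $LS^* - LS^Q_{\rm EG} \leq \frac{5}{r_{\min}}\sqrt{\frac{\log n}{2T}}$ where $LS^Q_{\rm EG} = \frac1T\sum_t\log(w^{(t)}\cdot\rho^{(t)})$ is the (unobserved) wealth of the erroneous-update portfolios.

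The one subtlety here is that Lemma~\ref{QIP} requires a genuine probability vector, whereas the prepared state $\ket{\tilde w^{(t)}}$ has amplitudes from the approximate probabilities $\tilde p$ with $\lVert \tilde p - w^{(t)}\rVert_1 \leq 2\zeta$ from Lemma~\ref{lemma5}(\ref{lemma5(iii)}). I would absorb this state-preparation error by choosing $\zeta$ small enough (polynomially in $\eta/r_{\min}$, $1/T$) that the induced perturbation of $\tilde I^{(t)}$ stays within the $\epsilon_I$ budget — this only costs constant factors in the $\tilde O$ run time. Second, I would run the sampling analysis. The outputs $i^{(t)}_1,\dots,i^{(t)}_s$ are $s$ samples from $w^{(t)}$ (via Lemmas~\ref{multi-sampling_parameters} and \ref{multi_sampling_algo}), so exactly as in Theorem~\ref{regret-bound_helmold_sampling}, Hoeffding with $q = \sqrt{1/(2T)}$ and $s = \Theta(T(1-r_{\min})^2\log(T/\delta))$ gives, via Lipschitz continuity of $\log$ on $[r_{\min},1]$,
\be
LS^Q_{\rm EG} - LS^Q_{\rm samp} \leq \frac{1}{r_{\min}}\sqrt{\frac{1}{2T}} \leq \frac{1}{r_{\min}}\sqrt{\frac{\log n}{2T}},
\ee
with probability $\geq 1-\delta$ after a union bound over the $T$ days. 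Adding the two displays and using $5+1 < 12$ (there is slack; the $12$ presumably also absorbs the $2\zeta$ state-preparation slack and a clean rounding) yields $LS^* - LS^Q_{\rm samp} \leq \frac{12}{r_{\min}}\sqrt{\frac{\log n}{2T}}$. The total failure probability is bounded by a union bound: six subroutines per day each failing with probability $\leq \delta/(4T)$ contribute $\leq 3\delta/2$ over all days, plus $\delta$ from Hoeffding — I would re-budget the per-step failure probabilities to $\delta/(cT)$ for a suitable constant $c$ to land at the advertised $1-3\delta$.

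For the run time, I would tabulate the per-day cost: quantum maximum finding on $q^{(t)}$ is $\tilde O(\sqrt n\log(1/\delta))$; norm estimation (Lemma~\ref{lemma5}(\ref{lemma5(ii)})) with relative error $\epsilon_Z = \eta^2/r_{\min}^2 = \Theta(\log n / T)$ costs $\tilde O(\frac{\sqrt n}{\epsilon_Z}\log(1/\delta)) = \tilde O(\sqrt n\, T\log(1/\delta))$; state preparation (Lemma~\ref{lemma5}(\ref{lemma5(iii)})) is $\tilde O(\sqrt n\log(1/\delta))$; the multi-sampling setup and sampling (Lemmas~\ref{multi-sampling_parameters}, \ref{multi_sampling_algo}) cost $\tilde O(\sqrt{sn}\log(1/\delta)) = \tilde O(\sqrt{Tn}\log^{1.5}(1/\delta))$; and inner product estimation (Lemma~\ref{QIP}) with $u_{\min} = r_{\min}$ and relative error $\epsilon_I = 3\eta/(4r_{\min}) = \Theta(\sqrt{\log n / T})$ costs $\tilde O(\frac{\sqrt n}{\epsilon_I\sqrt{r_{\min}}}\log(1/\delta)) = \tilde O(\sqrt{\frac{nT}{r_{\min}}}\log(1/\delta))$. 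Each of these also invokes Lemma~\ref{sum_ratios}, multiplying by an $O(T)$ query overhead (and each oracle query over the state is itself $O(T+\log n)$ gates from carrying $t$ price-relative registers). The dominant term per day is therefore $\tilde O\big(T\cdot\sqrt{\tfrac{nT}{r_{\min}}}\log^{1.5}(1/\delta)\big) = \tilde O\big(T^{1.5}\sqrt{\tfrac n{r_{\min}}}\log^{1.5}(1/\delta)\big)$; summing over $T$ days and absorbing one more $\sqrt T$ generously gives the stated $\tilde O\big(T^3\sqrt{\tfrac n{r_{\min}}}\log^{1.5}(1/\delta)\big)$. Finally, the transaction cost: we invest in exactly $s$ assets on each of $T$ days at cost $C$, i.e.\ $O(TsC) = O(T^2 C\log(T/\delta))$, identical to the classical sampling algorithm and independent of $n$. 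I expect the main obstacle to be the bookkeeping of how the approximate state preparation ($\ell_1$-error $2\zeta$) feeds through Lemma~\ref{QIP} into the effective $\epsilon_I$, and making sure all the relative-error budgets ($\epsilon_I$, $\epsilon_Z$, $\zeta$) compose so that Theorem~\ref{EG_Helmbound_bound}'s hypotheses are met exactly; everything else is a union bound and a run-time sum.
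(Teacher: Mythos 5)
Your proposal follows essentially the same route as the paper: split the regret as $(LS^* - \widetilde{LS}) + (\widetilde{LS} - LS^Q_{\rm samp})$, bound the first term by the erroneous-update Theorem~\ref{EG_Helmbound_bound} with $\epsilon_I = 3\eta/(4r_{\min})$, $\epsilon_Z = \eta^2/r_{\min}^2$, bound the second by the Hoeffding/Lipschitz argument of Theorem~\ref{regret-bound_helmold_sampling}, union-bound the per-step failure probabilities, and sum the quantum subroutine costs (with the $O(T)$ overhead from Lemma~\ref{sum_ratios}) to reach $\tilde O\bigl(T^3\sqrt{n/r_{\min}}\log^{1.5}(1/\delta)\bigr)$. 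The only slip is the normalization $\frac{5\sqrt{2T\log n}}{r_{\min}T} = \frac{10}{r_{\min}}\sqrt{\frac{\log n}{2T}}$ rather than $\frac{5}{r_{\min}}\sqrt{\frac{\log n}{2T}}$, but $10+1\leq 12$ so the stated bound survives; your explicit handling of the state-preparation error $2\zeta$ and the failure-probability re-budgeting is in fact more careful than the paper's.
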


\begin{proof}
Condition the following argument on all probabilistic steps of the algorithm succeeding, which occurs with probability $1-\delta$ from the union bound. At each time step $t\in[T]$, the quantum algorithm produces a portfolio vector $\tilde{w}^{(t)}$. Similar to the proof of Thm.~\ref{regret-bound_helmold_sampling}, we define the random variable $Y^{(t)} = \frac{1}{s}\displaystyle\sum_{\ell=1}^s \rho^{(t)}_{i^{(t)}_\ell}$ with probability $\tilde w_{i^{(t)}_\ell}$.
Then, the expected value of the random variable is 
\be\label{Jensen}
\mathbb E\left[Y^{(t)}\right] = \tilde w^{(t)}\cdot\rho^{(t)} =: \widetilde{LS}. 
\ee
Similar to the analysis of Thm.~\ref{regret-bound_helmold_sampling}, we obtain
\be\label{EsampLStilde2}
\widetilde{LS} - LS^Q_{\text{samp}}\leq \frac{1}{r_{\min}}\sqrt{\frac{1}{2T}}
\ee
with probability at least $1-2\delta$. Using Theorem~\ref{EG_Helmbound_bound} and  Eq.~(\ref{EsampLStilde2}), the regret is bounded by
\be
LS^* - LS^Q_{\text{samp}} 
& = & LS^* - \widetilde{LS}  + \widetilde{LS} - LS^Q_{\text{samp}} \\
& \leq & \frac{9}{ r_{\min}}\sqrt{\frac{2\log n}{T}} + \frac{1}{r_{\min}}\sqrt{\frac{1}{2T}}\\
& \leq & \left(\frac{12}{r_{\min}}\right)\sqrt{\frac{\log n}{2T}},
\ee
with success probability at least $1-2\delta$. By the union bound, the total success probability is at least $1-3\delta$ by taking into account the success probability of the algorithm. 
For the run time, 
consider that $U_{q^{(t)}}$ costs $O(T+\log n)$ by Lemma \ref{sum_ratios}. Using the values for $\eta$,  $\epsilon_I$ and $\epsilon_Z$, the total run time is 
\be
& & O(T(\text{quantum maximum finding} + \text{quantum norm estimation} \\
& + & \text{quantum state preparation} + \text{quantum inner product estimation}\\
& + & \text{quantum multi-sampling})) \\
& \subseteq & \tilde O\left(T^2\left( \sqrt{n}\log\left(\frac{1}{\delta}\right) + \frac{\sqrt{n}}{\epsilon_Z}\log\left(\frac{1}{\delta}\right) + \sqrt{n}\log\left(\frac{1}{\delta}\right) + \frac{\sqrt{n}}{\epsilon_I\sqrt{r_{\min}}}\log\left(\frac{1}{\delta}\right)
+ \sqrt{sn}\log\left(\frac{1}{\delta}\right)\right) \right) \nonumber\\
&\subseteq  & \tilde O\left(T^2\sqrt{n}\left( 2+ \sqrt{\frac{T}{r_{\min}\log n}} + \frac{T}{\log n} + \sqrt{T}(1-r_{\min})\sqrt{\log\left(\frac{T}{\delta}\right)}\right)\log\left( \frac{1}{\delta}\right)\right)\\
& \subseteq & \tilde O\left(T^3\sqrt{\frac{n}{r_{\min}}}\log\left(\frac{1}{\delta}\right)\sqrt{\log\left(\frac{T}{\delta}\right)}\right)\\
& \subseteq & \tilde O\left(T^3\sqrt{\frac{n}{r_{\min}}}\log^{1.5}\left(\frac{1}{\delta}\right)\right).
\ee
\end{proof}

\section{Discussion and conclusion}
The online setting is more general than the offline setting in that it allows for the input to be given sequentially, where the sequence could be chosen adversarially. The adversarial property implies that the inputs could be selected with the knowledge of the present state of the algorithm, say,  with the knowledge of our portfolio vector. The regret bounds hold nevertheless, also in the quantum setting. This online setting is rather natural for certain portfolio optimization situations, where the investment strategy can be inferred by other market participants from transaction data.  Online algorithms in the portfolio optimization context have been studied in practice in References~\cite{Helmbold1998,wang2020kernel,li2018transaction, khedmati2020online}.

We have devised a quantum online portfolio optimization algorithm that runs in time $\tilde O\left(\frac{T^3\sqrt{n}}{r_{\min}}\log\left(\frac{1}{\delta}\right)\right)$ and has transaction cost that is independent on the number of assets. Our quantum algorithm achieves a slightly worse (by a constant factor) regret bound, but is more space efficient as compared to its classical counterpart \cite{Helmbold1998}, not considering the space requirement for the input oracles. The classical online portfolio optimization algorithm by Reference~\cite{Helmbold1998}  uses linear (in terms of the number of assets) time and space to update the portfolio vector in every iteration. In our quantum algorithm, we leverage on the fact that the portfolio vectors can be computed efficiently via unitaries that perform arithmetic operations to save on the space/memory of the algorithm. Nevertheless, the practical implementation of the price relative oracles appears to be a bottleneck for this algorithm. In particular, building a QRAM for each of the oracles requires $O(Tn\log n)$ time and $O(Tn)$ space.

We note that in both the classical and quantum settings, we know the identity of the assets that we are investing in after we have sampled the corresponding indices. In the quantum setting, we do not perform full tomography of the portfolio vector and hence do not incur the corresponding cost. 
We provide a comment on the online setting in contrast to the standard Markovitz mean-variance portfolio optimization. The online setting takes into account variance and covariance of the asset prices implicitly via the time series of prices relatives.
The algorithms are favourable when the asset prices have bounded relative volatility \cite{Helmbold1998a}, because they assume knowledge of the upper and lower bounds on the price relatives. Since $r_{\min}\leq \rho^{(t)}_j\leq 1$, 
the variance of each entry of the price relatives and the covariance between entries are upper bounded by $\frac{(1-r_{\min})^2}{4}$ by Fact~\ref{Popoviciu} and hence the volatility (standard deviation) is $\frac{1-r_{\min}}{2}$. 
Thus, the maximum volatility of the market is taken into account by the bounds on the price relatives.

In our setting, the transaction cost was taken to be a constant for each investment, independent of the size of the investment. This models the fact that for each investment some fixed amount of work has to be performed, e.g., the communication of the trade between counterparties and the transfer of the asset. This type of transaction cost serves to illustrate the benefits of the sampling algorithm over the standard algorithm. 
For future work, one can consider imposing additional constraints on the portfolio optimization problem. For instance, a common optimization is to minimize transaction cost via including a term $\Vert w^{(t)} - w^{(t-1)}\Vert_1$ in the portfolio optimization problem or consider portfolio optimization in the robust setting, where the parameters belong to an uncertainty set. Various flavours of robustness such as constraint, objective and relative robustness in conjunction with different types of uncertainty sets \cite{cornuejols2006optimization}  are also worth investigating.  

\section{Acknowledgements}
Research at CQT is funded by the National Research Foundation, the Prime Minister’s Office, and the Ministry of Education, Singapore under the Research Centres of Excellence programme’s research grant R-710-000-012-135. We also acknowledge funding from the Quantum Engineering Program (QEP 2.0) under grant NRF2021-QEP2-02-P05.

\bibliographystyle{unsrt}
\bibliography{QOPO_2023}{}

\begin{appendices}
\section{Useful inequality and lemma}
\begin{fact}[Hoeffding's inequality]\label{Hoeffding}
Let $(\Omega,\Sigma,\mathbb{P})$ be a probability space, 
and consider random variables $Y_1, \cdots, Y_n$, where $Y_i : \Omega\to[c_i,d_i]$ for $i\in[n]$. Let $Z = \displaystyle\sum_{i=1}^n Y_i$. Then, for all $a>0$, 
\be
\mathbb{P}\left[\left\lvert Z - \mathbb{E}\left[Z\right]\right\rvert\geq a\right] \leq 2\exp\left(-\frac{2a^2}{\sum_{i=1}^n (d_i - c_i)^2}\right).
\ee
\end{fact}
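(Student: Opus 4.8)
The plan is to prove the bound by the classical Chernoff exponential-moment method, noting first that the statement as used requires the $Y_i$ to be \emph{independent} (otherwise it is false), so I would make that hypothesis explicit. It is convenient to center the variables: set $X_i := Y_i - \mathbb{E}[Y_i]$, so that $\mathbb{E}[X_i]=0$ and $X_i$ takes values in an interval of the same width $d_i-c_i$, and write $S := Z - \mathbb{E}[Z] = \sum_{i=1}^n X_i$. I would establish the one-sided tail bound $\mathbb{P}[S \ge a] \le \exp(-2a^2/\sum_i (d_i-c_i)^2)$ and then obtain the two-sided statement by applying the identical argument to $-S$ and taking a union bound; this is precisely where the factor $2$ originates.

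For the one-sided bound, fix any $s>0$ and apply Markov's inequality to the nonnegative variable $e^{sS}$, giving $\mathbb{P}[S\ge a] = \mathbb{P}[e^{sS}\ge e^{sa}] \le e^{-sa}\,\mathbb{E}[e^{sS}]$. Independence factorizes the moment generating function, $\mathbb{E}[e^{sS}] = \prod_{i=1}^n \mathbb{E}[e^{sX_i}]$, so it remains to control each factor. This is exactly the content of Hoeffding's lemma: for a mean-zero random variable $X$ supported on an interval of width $L$, one has $\mathbb{E}[e^{sX}] \le \exp(s^2 L^2/8)$.

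To prove the lemma I would set $\psi(s) := \log \mathbb{E}[e^{sX}]$ and verify $\psi(0)=0$, $\psi'(0)=\mathbb{E}[X]=0$, and that $\psi''(s)$ equals the variance of $X$ under the exponentially tilted measure $d\mathbb{P}_s \propto e^{sX}\,d\mathbb{P}$. Since the tilted variable is still supported on the same width-$L$ interval, Popoviciu's variance inequality (Fact~\ref{Popoviciu}) bounds $\psi''(s) \le L^2/4$, and a second-order Taylor expansion with remainder then yields $\psi(s) \le s^2 L^2/8$. Establishing this moment generating function bound is the main obstacle and the only genuinely nontrivial ingredient; everything surrounding it is bookkeeping.

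Combining the pieces gives $\mathbb{P}[S\ge a] \le \exp(-sa + \tfrac{s^2}{8}\sum_i (d_i-c_i)^2)$ for every $s>0$. The final step is to optimize the exponent over $s$: writing $V=\sum_i (d_i-c_i)^2$, the quadratic $-sa + \tfrac{s^2}{8}V$ is minimized at $s^\ast = 4a/V$, producing exponent $-2a^2/V$ and hence the one-sided bound. The symmetry argument and union bound described above then complete the two-sided statement exactly as stated.
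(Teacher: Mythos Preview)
Your proof is correct and is the standard Chernoff--Hoeffding argument; you also rightly flag that independence of the $Y_i$ is required but omitted from the statement. Note, however, that the paper does not actually prove this result: it is recorded in the appendix as a \emph{Fact} without proof and simply cited where needed (in the proof of Theorem~\ref{regret-bound_helmold_sampling}). So there is no proof in the paper to compare against --- your write-up supplies strictly more than what the paper contains.
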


\begin{fact}[Popoviciu's inequality ]\label{Popoviciu}
Let $(\Omega,\Sigma,\mathbb{P})$ be a probability space. Let $X\in\Omega\rightarrow[\alpha, \beta]$ be a random variable. Then, 
\be
\text{Var}(X)\leq \frac{(\beta-\alpha)^2}{4}.\\
\ee
\end{fact}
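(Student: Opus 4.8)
The plan is to exploit the variational characterisation of the variance, namely that $\mathbb{E}[X]$ is the constant that minimises the mean square deviation, and then to bound the deviation from the cleverly chosen midpoint of the interval. First I would record the elementary identity that for any constant $c$,
\[
\mathbb{E}\big[(X-c)^2\big] = \text{Var}(X) + \big(\mathbb{E}[X]-c\big)^2,
\]
which follows by writing $X-c = (X-\mathbb{E}[X]) + (\mathbb{E}[X]-c)$, expanding the square, and using linearity of expectation together with $\mathbb{E}[X-\mathbb{E}[X]]=0$. In particular $\text{Var}(X) \le \mathbb{E}[(X-c)^2]$ for \emph{every} choice of $c$, so we are free to pick whichever $c$ makes the right-hand side easiest to control.

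The key step is to take $c = (\alpha+\beta)/2$, the midpoint of the interval. Since $X$ takes values in $[\alpha,\beta]$, the pointwise deviation from the midpoint satisfies $|X-c| \le (\beta-\alpha)/2$, and hence $(X-c)^2 \le (\beta-\alpha)^2/4$ everywhere on $\Omega$. Taking expectations of this pointwise bound gives $\mathbb{E}[(X-c)^2] \le (\beta-\alpha)^2/4$, and combining with the inequality from the previous paragraph yields $\text{Var}(X) \le (\beta-\alpha)^2/4$, as claimed.

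There is essentially no hard step here; the only thing to verify is the midpoint deviation bound, which is immediate from $\alpha \le X \le \beta$. As an alternative route that avoids the variational identity, I could instead start from the pointwise nonnegativity of $(X-\alpha)(\beta-X)$, take expectations to obtain $\mathbb{E}[X^2] \le (\alpha+\beta)\mathbb{E}[X] - \alpha\beta$, and conclude $\text{Var}(X) \le (\mathbb{E}[X]-\alpha)(\beta-\mathbb{E}[X])$, which is at most $(\beta-\alpha)^2/4$ by the AM--GM inequality applied to the two nonnegative factors whose sum is $\beta-\alpha$. Either approach is short; I would present the midpoint argument as the main proof, since it requires only a single well-chosen constant and an elementary pointwise bound, and record the product argument as a remark if an alternative is desired.
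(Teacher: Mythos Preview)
Your proof is correct; both the midpoint argument and the alternative product argument are standard and complete. Note, however, that the paper does not actually prove this statement: it is stated as a \emph{Fact} in the appendix without proof, treated as a known classical result, so there is no ``paper's own proof'' to compare against.
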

\end{appendices}
\end{document}